\newtheorem{lemma}{Lemma}
\newtheorem{theorem}[lemma]{Theorem}
\newtheorem{corollary}[lemma]{Corollary}
\newtheorem{proposition}[lemma]{Proposition}
\newtheorem{definition}[lemma]{Definition}
\newcommand\C{{\mathbb{C}}}
\newcommand\F{{\mathbb{F}}}
\newcommand\SL{{\mathop\textup{SL}}}
\newcommand\GL{{\mathop\textup{GL}}}
\newcommand\Sp{{\mathop\textup{Sp}}}
\newcommand\softoh{{\widetilde{O}}}
\newcommand\remove[1]{{}}
\newcommand\irr{{\mathop\textup{Irr}}}
\pgfplotsset{height=7cm,width=10cm,compat=1.9}
\title{A new algorithm for fast generalized DFTs\footnote{A version of
  this paper appeared in SODA 2018 as \cite{HU18}.}}
\author{Chloe Ching-Yun Hsu \\ Caltech
  \and Chris Umans\thanks{Supported by NSF grant CCF-1423544 and a Simons Foundation Investigator grant.}\\ Caltech}
\begin{document}
\maketitle
\setcounter{page}{0}
\thispagestyle{empty}

\date
\abstract{We give an new arithmetic algorithm to compute the
  generalized Discrete Fourier Transform (DFT) over finite groups $G$. 
The new algorithm uses $O(|G|^{\omega/2 + o(1)})$
  operations to compute the generalized DFT over finite groups of Lie type, including the linear,
  orthogonal, and symplectic families and their variants, as well as
  all finite simple groups of Lie type. Here $\omega$ is the exponent of
  matrix multiplication, so the exponent $\omega/2$ is optimal if
  $\omega = 2$. 

Previously, ``exponent one'' algorithms were known for
  supersolvable groups and the symmetric and alternating
  groups. No exponent one algorithms were known, even under the
  assumption $\omega = 2$, 
  for families of linear groups of fixed dimension, and indeed the previous best-known algorithm for
  $\SL_2(\F_q)$ had exponent $4/3$ despite being the focus of
  significant effort. We unconditionally achieve exponent at most
  $1.19$ for this group, and exponent one if $\omega = 2$.

Our algorithm also yields an improved exponent for computing the
generalized  DFT over general finite groups $G$, which beats the
longstanding previous best upper bound, for any $\omega$. In
particular, assuming $\omega = 2$, we achieve exponent $\sqrt{2}$, while the previous best was $3/2$.}

\newpage

\section{Introduction}

Let $G$ be a finite group and let $\irr(G)$ denote a complete set of
irreducible representations. Given an element of the group algebra $c
\in \C[G]$, a generalized DFT is a linear transform that takes $c$
to \[\sum_{g \in G}c_g\cdot\bigoplus_{\rho \in \irr(G)}\rho(g).\]
This is the fundamental linear operation that maps the standard basis
for the group algebra $\C[G]$ to the “Fourier basis” of irreducible
representations of group $G$. It has applications in data analysis
\cite{R97}, as a component in other algorithms (including fast
operations on polynomials and in the Cohn-Umans matrix multiplication
algorithms), and as the basis for quantum algorithms for problems
entailing a Hidden Subgroup Problem \cite{MRsurvey}. As one varies the
underlying group $G$, the generalized DFT is a rich
source of structured linear maps, which one can hope to compute in
nearly-linear time, generalizing the famous Cooley-Tukey FFT for
cyclic groups of order $2^k$.  

We typically speak of the complexity of computing this map in the (non-uniform) arithmetic
circuit model and do not concern ourselves with {\em finding} the
irreducible representations. The trivial algorithm thus requires $O(|G|^2)$
operations. The best-known algorithm that works for general finite
groups $G$ achieves $O(|G|^{1.5})$ operations\footnote{Note
  that exercise 13.16 in \cite{BCS} claims that the exponent 1.5 can be
  reduced to $1.44$ but this seems to be an error, as
  discussed in Section \ref{sec:beth-clausen}.} assuming the exponent
of matrix multiplication is two (see Section
\ref{sec:beth-clausen}). 
For a number of special cases,
``exponent 1'' algorithms are known: for abelian groups,
the symmetric and alternating groups \cite{clausenfast}, and the so-called {\em
  supersolvable} groups \cite{Baum}. A group that has resisted such exponent
$1$ algorithms despite a significant amount of work is $\SL_2(\F_q)$,
where the best known algorithm achieves $O(|G|^{4/3})$ \cite{LR92}. This group was described as a ``particularly
interesting and thorny special case'' by Maslen, Rockmore, and Wolff
in \cite{MRW}. 

In this paper we obtain exponent one for $\SL_2(\F_q)$ under the assumption
that $\omega = 2$ ($\omega$ is the exponent of matrix
multiplication). Using the current best upper bound $\omega <
2.3729$ \cite{legall}, we obtain exponent $1.19$ for $\SL_2(\F_q)$ unconditionally,
which improves the previous $4/3$ exponent. Our new algorithm is
quite general and leads to a broad array of new results:
\begin{itemize}
\item we achieve exponent $\omega/2$
for essentially all linear groups including 
the general, orthogonal, and symplectic groups,
and their special and projective versions, and for all finite groups
with a {\em split $(B,N)$-pair}; we work out the
most common cases explicitly in this paper in Section
\ref{sec:linear-groups}.
\item we achieve exponent $\omega/2$ for all finite simple groups
  (see Theorem \ref{thm:all-finite-simple}). 
\item we achieve an exponent bound for general groups $G$ which beats the
longstanding previous best upper bound, for any $\omega$ (see Theorem
\ref{thm:general-groups}). To do this we prove a
structural result about arbitrary finite groups (Theorem \ref{thm:HK-pair}) that
relies on the Classification Theorem, which may be of independent
interest.   In
particular, assuming $\omega = 2$, we achieve exponent $\sqrt{2}$, while the previous best was $3/2$. 
\end{itemize}

\paragraph{The main idea.} At its core, the seminal Beth-Clausen
fast generalized DFT is a recursive algorithm
that computes a DFT with respect to $G$ by computing several DFTs
with respect to $H$, a subgroup of $G$. Each of the $[G:H]$ many
$H$-DFTs is lifted to $G$ and then summed together. See Corollary \ref{thm:beth-clausen-coarser}. A bottleneck in
this algorithm comes from the final summation step, which in general costs
$[G:H]|G|$. Since there are groups whose largest subgroup $H$ has
index at least $|G|^{1/2}$, exponent $3/2$ is the best possible within
this approach. Improvements have generally come from using specific
knowledge of how the induced representations from $H$ up to $G$ break
up; this can sometimes be used to circumvent the bottleneck
summation. In the case of supersolvable groups and the symmetric and
alternating groups, this has yielded exponent one algorithms
\cite{Baum, clausenfast}. In the
case of solvable groups, one can obtain exponent $\omega/2$
\cite{beth, CB}.

In this paper we devise a more general way to circumvent the bottleneck summation, which depends on the structure of the group rather than knowledge
of the representation theory. Our new recursive step permits us to
decompose $G$ via {\em two} subgroups $H$ and $K$, and
recurse on $H$ and $K$. See Theorem \ref{thm:main}. One side-effect is an alternative proof of the
$\omega/2$ exponent for solvable groups that does not require
knowledge of the representation theory of the group, in Section
\ref{sec:solvable}. Our reduction bears some similarity to the double
coset algorithm of \cite{double-coset}; a key difference seems to be the use
of fast matrix multiplication at an opportune time in the procedure.

\subsection{Past and related work}

A good description of past work in this area can be found in Section 13.5
of \cite{BCS}. The first algorithm generalizing beyond the
abelian case is due to Beth in 1984 \cite{beth}; this algorithm is
described in Section \ref{sec:beth-clausen} in a form often
credited jointly to Beth and Clausen. This algorithm was the best known
for the general case of an arbitrary finite group prior to this work. Two other milestones are the
$O(|G|\log|G|)$ algorithm for supersolvable groups due to Baum
\cite{Baum}, and the $O(|G|\log^3 |G|)$ algorithm for the symmetric group
due to Clausen \cite{clausenfast}. The latter algorithm was improved
to $O(|G|\log^2 |G|)$ by Maslen \cite{Maslen}, and very recently to {\em linear} for
the special case of $S_{n-k}$-invariant functions on $S_n$ with $n >
2k$ \cite{Clausen-Huhne}. Wreath products were studied by Rockmore \cite{wreath} who obtained exponent one algorithms in certain cases. 

In the 1990s, Maslen, Rockmore and coauthors, developed the so-called ``separation
of variables'' approach, which relies on non-trivial decompositions along
chains of subgroups via {\em Bratteli diagrams} and (again) detailed
knowledge of the representation theory of the underlying groups. There
is a rather large body of literature on this approach and it has been
applied to a wide variety of group algebras and more general algebraic
objects. For a fuller description of this approach and the results
obtained, the reader is referred to the surveys \cite{MRsurvey, MRsurvey2, Rrecent}, and the most
recent paper in this line of work \cite{MRW}.

For the present paper, important results for comparison are
the previous best known results for linear groups of various
sorts. We gather them in Figure \ref{fig:previous-best}. 
Notice that for each fixed dimension $n$, these all
  represent exponent $\alpha$ algorithms for $\alpha > 1$. Our methods
  give exponent $\omega/2$ algorithms for all of these groups, which
  translates to (the optimal) exponent $1$ if $\omega = 2$. Using the current
  best upper bounds on $\omega$ our methods
  give concrete
  improvements in small dimension, in all cases; we explicitly highlight only the case
  of $\SL_2(\F_q)$ in this paper.

\begin{figure}[t]
\begin{center}
\begin{tabular}{|l|l|l|}
	\hline

	Group $G$ & Upper bound & Reference \\
	
	\hline
        $\SL_2(\F_q)$ & $\softoh(q|G|)$ & Theorem 1.1 in \cite{LR92} \\
\hline
	$\GL_n(\F_q)$ & $\softoh(q^n|G|)$ &
                                                                     Theorem
                                           4.3 in \cite{maslennew} \\
	$\mbox{P}\Sp_{2n}(\F_q)$ & $\softoh(q^{5n-3}|G|)$ & Theorem 5.14 in \cite{MRsurvey2}\\
        $O_{2n+1}(\F_q)$ & $\softoh(q^{5n-3}|G|)$ & Theorem 5.14 in
                                              \cite{MRsurvey2}\\
	$O^+_{2n}(\F_q)$, $n \geq 4$ & $\softoh(q^{5n-6}|G|)$ & Theorem 5.14
                                              in
                                              \cite{MRsurvey2} \\
	\hline
\end{tabular}
\end{center}
\caption{Previously best known running times for the generalized DFT
  over various families of linear
  groups. In this table, the $\softoh(\cdot)$ notation hides lower
  order terms and the dependence on $n$.}
\label{fig:previous-best}
\end{figure}

\subsection{Notation and preliminaries} 
Throughout this paper we will use the phrase 

\smallskip
\centerline{``$G$ has a
generalized DFT using $O(|G|^{\alpha + \epsilon})$ operations, for all
$\epsilon > 0$''} 
\smallskip

\noindent where $G$ is a finite group and $\alpha \ge 1$ is a
real number. We mean by this that
there are {\em universal} constants $c_\epsilon$ independent of the
group $G$ under consideration so that for each $\epsilon > 0$, the
operation count is at most $c_\epsilon|G|^{\alpha + \epsilon}$. Such
an algorithm will be referred to as an ``exponent $\alpha$''
algorithm. This comports with the precise definition of the exponent
of matrix multiplication, $\omega$: that there are universal
constants $b_\epsilon$ for which $n \times n$ matrix
multiplication can be performed using at most $b_\epsilon n^{\omega +
  \epsilon}$ operations, for each $\epsilon > 0$. Indeed we will often
report our algorithms' operation counts in terms of $\omega$. In such
cases matrix multiplication is always used as a black box, so,
for example, an
operation count of $O(|G|^{\omega/2})$ should be interpreted to mean: if one uses a fast matrix multiplication algorithm with exponent $\alpha$
(which may range from $2$ to $3$), then the operation count is
$O(|G|^{\alpha/2}$). In particular, in real implementations, one
might well use standard matrix multiplication and plug in $3$ for
$\omega$ in the operation count bound.  

All logarithms are base 2. We use $\irr(G)$ to denote the complete set
of irreducible representations of $G$ being used for the DFT at
hand. In the presentation to follow, we assume the underlying field is
$\C$; however our algorithms work over any field $\F_{p^k}$ whose
characteristic $p$ does not
divide the order of the group, and for which $k$ is sufficiently large
for $\F_{p^k}$ to represent a complete set of irreducibles. 

A basic fact is that $\sum_{\rho \in \irr(G)} \dim(\rho)^2 = |G|$,
which implies that for all $\rho \in \irr(G)$, we have $\dim(\rho) \le |G|^{1/2}$. An
inequality that we use repeatedly is this one:
\begin{proposition}
\label{prop:prelim-inequality}
For any real number $\alpha  > 2$, we have
\[\sum_{\rho \in \irr(G)} O(\dim(\rho)^{\alpha}) \le O(|G|^{\alpha/2}).\]
\end{proposition}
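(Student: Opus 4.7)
The plan is to factor out two powers of the dimension and then use the two basic facts stated immediately before the proposition, namely $\dim(\rho) \le |G|^{1/2}$ and $\sum_{\rho \in \irr(G)} \dim(\rho)^2 = |G|$. The key observation is that since $\alpha > 2$, the exponent $\alpha - 2$ is positive, so the uniform upper bound $\dim(\rho) \le |G|^{1/2}$ can be applied to the ``excess'' factor $\dim(\rho)^{\alpha - 2}$ without blowing anything up.

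Concretely, I would write
\[
\dim(\rho)^{\alpha} \;=\; \dim(\rho)^{2}\cdot\dim(\rho)^{\alpha-2}
\;\le\; \dim(\rho)^{2}\cdot \bigl(|G|^{1/2}\bigr)^{\alpha-2}
\;=\; \dim(\rho)^{2}\cdot |G|^{(\alpha-2)/2},
\]
and then sum over $\rho \in \irr(G)$. The factor $|G|^{(\alpha-2)/2}$ is independent of $\rho$ and pulls out of the sum, leaving $\sum_{\rho \in \irr(G)} \dim(\rho)^2 = |G|$, which multiplies to give $|G|^{(\alpha-2)/2}\cdot|G| = |G|^{\alpha/2}$. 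Absorbing the hidden constants in the $O(\cdot)$ on the left through the summation gives the stated bound.

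There is no real obstacle here; the only ``work'' is remembering to split the exponent $\alpha$ as $2 + (\alpha - 2)$ so that the Plancherel-type identity $\sum \dim(\rho)^2 = |G|$ is directly applicable. The hypothesis $\alpha > 2$ is exactly what makes the split nonnegative (and in fact the same argument works for $\alpha = 2$ trivially, and fails for $\alpha < 2$, where an inequality in the opposite direction would hold).
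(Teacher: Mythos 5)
Your proof is correct and is essentially identical to the paper's: both factor $\dim(\rho)^{\alpha} = \dim(\rho)^2 \cdot \dim(\rho)^{\alpha-2}$, bound the second factor by $|G|^{(\alpha-2)/2}$ via $\dim(\rho)\le |G|^{1/2}$, and then apply $\sum_\rho \dim(\rho)^2 = |G|$. The paper phrases the intermediate step in terms of $\dim(\rho_{\max})^{\alpha-2}$ before bounding that by $|G|^{(\alpha-2)/2}$, but this is only a cosmetic difference.
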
 
\begin{proof}
Set $\rho_{\max}$ to be an irrep of largest dimension. We have
\[\sum_{\rho \in \irr(G)} O(\dim(\rho)^{\alpha}) \le 
O(\dim(\rho_{\max})^{\alpha - 2})\sum_{\rho \in \irr(G)} \dim(\rho)^2 =
O(\dim(\rho_{\max})^{\alpha - 2}|G|) \le  O(|G|^{\alpha/2}),\]
where the last inequality used the fact that $\dim(\rho_{\max}) \le |G|^{1/2}$.
\end{proof}
We also need Lev's Theorem: 

\begin{theorem}[\cite{lev}]
Every finite group $G$ has a proper subgroup $H$ of order at least
$|G|^{1/2}$, unless $G$ is cyclic of prime order.
\label{thm:Lev}
\end{theorem}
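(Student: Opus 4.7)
The plan is to proceed by induction on $|G|$, reducing to the case where $G$ is simple. Suppose first that $G$ has a nontrivial proper normal subgroup $N$. If $|N| \ge |G|^{1/2}$ we are done with $H = N$. Otherwise $|N| < |G|^{1/2}$, so $|G/N| > |G|^{1/2}$. If $G/N$ is not cyclic of prime order, the inductive hypothesis yields a proper subgroup $\bar H < G/N$ with $|\bar H| \ge |G/N|^{1/2}$; taking $H \le G$ to be its preimage gives a proper subgroup satisfying
\[|H|^2 = |N|^2 \cdot |\bar H|^2 \ge |N|^2 \cdot |G/N| = |N| \cdot |G| \ge |G|.\]

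The remaining subcase of the non-simple reduction is when $G/N$ is cyclic of prime order $p$. Here $|G| = p \cdot |N|$, and the bound $|N| < |G|^{1/2}$ forces $|N| < p$, so in particular $p \nmid |N|$. A Sylow $p$-subgroup $P \le G$ then has order $p$ and satisfies $|P|^2 = p^2 \ge p \cdot |N| = |G|$. It is proper unless $|N| = 1$, in which case $G$ itself is cyclic of prime order and is excluded by hypothesis.

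The main obstacle is the case when $G$ is simple. Simple abelian groups are cyclic of prime order and excluded, so we must handle non-abelian simple $G$. Here the inductive machinery breaks down, and one can no longer extract a large subgroup from a normal subgroup and its quotient. I would resolve this case by appealing to the classification of finite simple groups and verifying the bound family by family: for alternating $A_n$ with $n \ge 5$, the subgroup $A_{n-1}$ of index $n$ satisfies $|A_{n-1}|^2 \ge |A_n|$ since $(n-1)! \ge 2n$; for simple groups of Lie type, a Borel or maximal parabolic subgroup has order comfortably above $|G|^{1/2}$ from the standard order formulas; and for the 26 sporadic groups one checks directly from published tables of maximal subgroups. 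A uniform elementary argument avoiding the classification would be preferable, but no such proof appears to follow from Sylow theory or the class equation alone.
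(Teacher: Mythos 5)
The paper does not prove this theorem; it simply cites \cite{lev} and uses it as a black box, so there is no textual proof to compare against. What you have written is a reconstruction of the argument, and it is essentially the right one. The inductive reduction is correct and clean: picking any nontrivial proper normal $N$, the case $|N|\ge|G|^{1/2}$ is immediate, the case where $G/N$ is not cyclic of prime order passes to the preimage of a large subgroup of the quotient with the computation $|H|^2 = |N|^2|\bar H|^2 \ge |N|\,|G| \ge |G|$, and the case $G/N\cong\Z/p\Z$ with $|N|<|G|^{1/2}$ forces $p>|N|$, $p\nmid|N|$, so a Sylow $p$-subgroup of order $p$ works and is proper precisely when $G$ is not itself of prime order. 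All the properness checks go through.

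The simple non-abelian case is where you hand-wave, and you are right that CFSG appears unavoidable: Lev's original proof in \cite{lev} does invoke the classification, so your instinct matches the literature. Your $A_n$ calculation is correct ($(n-1)!\ge 2n$ for $n\ge 5$). For the Lie-type families, be aware that a Borel subgroup alone is sometimes too small when $q$ is small (for instance $\mathrm{PSL}_n(2)$, where the torus contributes nothing); in those cases one should use a maximal parabolic, e.g.\ a point stabilizer of index $(q^n-1)/(q-1)$, and then the order formulas do give $|P|^2\ge|G|$ across all families. You already allow for this by writing ``Borel or maximal parabolic,'' so the sketch is sound, but a careful write-up would need to verify a parabolic rather than the Borel in the small-$q$ cases. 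With that caveat, your proposal is a correct proof outline that matches the standard (classification-dependent) approach.
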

This is easily seen to be tight by considering the cyclic group of
order $p^2$, for $p$ prime. 

In a few key places, we utilize the Kronecker product (or tensor product) of two
matrices $A$ and $B$, and there our convention is to name the indices of $A \otimes B$ so that
\[(A\otimes B)[(i,i'), (j,j')] = A[i,j]B[i',j'].\]

\section{The single subgroup reduction}
\label{sec:beth-clausen}
In this section we describe the recursive generalized DFT attributed to Beth and Clausen (see \cite{BCS}). Given a subgroup $H$
of a finite group $G$, this reduction computes a DFT with respect to
$G$ via DFTs with respect to $H$. Our presentation makes
use of fast matrix multiplication where possible and so the running
time will be expressed in terms of $\omega$. A key definition is that
of an {\em $H$-adapted basis} for the irreps of $G$. This is a basis
in which the restriction of each irrep of $G$ to $H$ respects the direct
sum decomposition into irreps of $H$. In concrete terms, this means
that for each irrep $\rho \in \irr(G)$, while for general $g \in G$,
$\rho(g)$ is a $\dim(\rho) \times \dim(\rho)$ matrix, for $g \in H$,
$\rho(g)$ is a block-diagonal matrix with block sizes coming from the
set $\{\dim(\sigma): \sigma \in \irr(H)\}$. 

\begin{theorem}
\label{thm:beth-clausen}
Let $G$ be a finite group and let $H$ be a subgroup. Then we can compute a DFT with respect to $G$ and
an $H$-adapted basis, at
a cost of $[G:H]$ many $H$-DFTs plus \[[G:H]|G| + [G:H]^2 \sum_{\sigma \in
  \irr(H)}O(\dim(\sigma)^{\omega + \epsilon})\]
operations, for all $\epsilon > 0$. 
\end{theorem}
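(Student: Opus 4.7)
The plan is to use the left coset decomposition of $G$ with respect to $H$ to reduce one $G$-DFT to several $H$-DFTs plus a structured linear-algebra combination. Fix coset representatives $t_1, \dots, t_{[G:H]}$ and split the input as $c_i(h) := c_{t_i h}$. For each $\rho \in \irr(G)$ we have
\[\sum_{g \in G} c_g\, \rho(g) \;=\; \sum_{i=1}^{[G:H]} \rho(t_i) \left( \sum_{h \in H} c_i(h)\, \rho(h) \right).\]
In an $H$-adapted basis, $\rho(h)$ (for $h\in H$) is block-diagonal: each irreducible $\sigma \in \irr(H)$ appears as a block $\sigma(h)$ with multiplicity $m_{\rho,\sigma}$ equal to the multiplicity of $\sigma$ in $\rho|_H$. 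Writing $\hat{c}_i(\sigma) := \sum_h c_i(h)\sigma(h)$ for the $\sigma$-component of the $H$-DFT of $c_i$, the inner sum therefore has the form
\[M_i^{(\rho)} \;:=\; \sum_{h \in H} c_i(h)\, \rho(h) \;=\; \bigoplus_{\sigma \in \irr(H)} \bigl(\hat{c}_i(\sigma)\bigr)^{\oplus m_{\rho,\sigma}}.\]

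Step 1 of the algorithm computes an $H$-DFT of each $c_i$, costing $[G:H]$ many $H$-DFTs and producing all the $\hat{c}_i(\sigma)$; assembling each $M_i^{(\rho)}$ from these is just data copying. Step 2 computes $\sum_i \rho(t_i) M_i^{(\rho)}$ for every $\rho$. I would compute each product $\rho(t_i) M_i^{(\rho)}$ by exploiting the block structure of $M_i^{(\rho)}$: the column block of $\rho(t_i) M_i^{(\rho)}$ indexed by the $k$th copy of $\sigma$ is the matching $\dim(\rho) \times \dim(\sigma)$ column block of $\rho(t_i)$ times $\hat{c}_i(\sigma)$. Slicing the tall factor into $\lceil \dim(\rho)/\dim(\sigma) \rceil$ square pieces of side $\dim(\sigma)$ and invoking fast matrix multiplication as a black box, each such block product costs $O\bigl(\dim(\rho)\,\dim(\sigma)^{\omega-1+\epsilon}\bigr)$. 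Once formed, accumulating $\rho(t_i) M_i^{(\rho)}$ into the running answer for $\rho$ costs $\dim(\rho)^2$ additions.

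The main obstacle is the cost accounting, which I would handle via Frobenius reciprocity. Summing the multiplication costs across all $(\rho, i, \sigma, k)$ gives
\[\sum_i \sum_\rho \sum_\sigma m_{\rho,\sigma}\, O\!\bigl(\dim(\rho)\,\dim(\sigma)^{\omega-1+\epsilon}\bigr)
\;=\; [G:H] \sum_\sigma \dim(\sigma)^{\omega-1+\epsilon} \sum_\rho m_{\rho,\sigma}\dim(\rho).\]
The inner sum $\sum_\rho m_{\rho,\sigma}\dim(\rho)$ equals $\dim(\mathrm{Ind}_H^G \sigma) = [G:H]\dim(\sigma)$ by Frobenius reciprocity, so this term collapses to $[G:H]^2 \sum_\sigma O(\dim(\sigma)^{\omega+\epsilon})$. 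Adding the accumulation cost $\sum_\rho [G:H]\dim(\rho)^2 = [G:H]|G|$ (using $\sum_\rho \dim(\rho)^2 = |G|$) and the initial $[G:H]$ many $H$-DFTs yields the claimed bound.
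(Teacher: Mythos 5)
Your proposal is correct and takes essentially the same approach as the paper: you use left coset representatives (multiplying $\rho(t_i)$ on the left of the block-diagonal $H$-DFT output), while the paper uses right cosets (multiplying $\bigoplus_\rho \rho(g_i)$ on the right), but these are mirror images. The block-diagonal structure exploitation via an $H$-adapted basis, the slicing of tall rectangular blocks into $\dim(\sigma)\times\dim(\sigma)$ squares to invoke fast matrix multiplication, and the Frobenius-reciprocity cost accounting $\sum_\rho m_{\rho,\sigma}\dim(\rho) = [G:H]\dim(\sigma)$ all match the paper's argument.
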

\begin{proof}
Let $g_1, g_2, \ldots, g_t$ be a system of distinct right coset
representatives of $H$ in $G$, so $t = [G:H]$. Let $c$ be an
element of $\C[G]$. We can write \[c =
\sum_{g \in G}c_gg = \sum_{i =1}^t \left (\sum_{h \in
    H}c^{(i)}_h h\right ) g_i\]
for some elements $c^{(i)}  = \left (\sum_{h \in H}c^{(i)}_h h \right )\in \C[H]$. By computing an $H$-DFT for
each $i$, we obtain
\[s_i = \sum_{h \in H} c^{(i)}_h \bigoplus_{\sigma \in \irr(H)}
\sigma(h).\]
Let $\overline{s_i}$ be the lift of $s_i$ in which we repeat each
$\sigma \in \irr(h)$ as many times as it occurs in the irreps of
$G$. We notice that
\[\sum_{g \in G} c_g \bigoplus_{\rho \in \irr(G)} \rho(g) =
\sum_{i=1}^{t}\overline{s_i}\cdot \left ( \bigoplus_{\rho \in \irr(G)}
  \rho(g_i) \right ).\]
Moreover, since we are using an $H$-adapted basis, each of the $t$
matrix multiplications is the product of a block-diagonal matrix
having blocks whose dimensions are those of the irreps of $H$, with
a block diagonal matrix having blocks whose dimensions are those of the
irreps of $G$. If $n_{\sigma, \rho}$ denotes the number of occurences of
$\sigma \in \irr(H)$ in $\rho \in \irr(G)$, the cost of performing this structured matrix
multiplication is at most
\begin{eqnarray*}
\sum_{\sigma \in \irr(H)} \sum_{\rho \in \irr(G)} n_{\sigma, \rho}
O(\dim(\sigma)^{\omega +\epsilon})\frac{\dim(\rho)}{\dim(\sigma)}
& = &   \sum_{\sigma \in \irr(H)} O(\dim(\sigma)^{\omega -1+\epsilon}) \sum_{\rho \in \irr(G)} n_{\sigma,  \rho}\dim(\rho) \\
& = & \sum_{\sigma \in \irr(H)} O(\dim(\sigma)^{\omega -1+\epsilon})
      \dim(\sigma) 
[G:H] \\
& = & \sum_{\sigma \in \irr(H)}O(\dim(\sigma)^{\omega+\epsilon})[G:H]
\end{eqnarray*}
where the second-to-last equality used Frobenius reciprocity: $n_{\sigma, \rho}$ also
equals the number of times $\rho$ occurs in the induction of $\sigma$
from $H$ up to $G$, and then $\sum_\rho n_{\sigma, \rho}\dim(\rho)$ is
easily seen to be the dimension of the induced representation, which
is $\dim(\sigma)[G:H]$.
We have to do $[G:H]$ many of these structured multiplications, and
then sum them up. The summing costs $[G:H] |G|$ many operations, since
the block-diagonal matrices we are summing have, in general, $|G|$ nonzeros.
\end{proof}
We note that this final sum, which costs $|G|[G:H]$ operations, cannot
be accelerated by fast matrix multiplication, and this appears to have
been overlooked in the claim in \cite{BCS} that by using fast matrix
multiplication together with Theorem \ref{thm:Lev} one can achieve an
upper bound of $O(|G|^{1.44})$ for all finite groups $G$. Indeed when $|H|
= |G|^{1/2}$, which it may be in the worst case, the $|G|[G:H]$ term by itself is at least
$|G|^{3/2}$. Our ``double subgroup reduction'' can be seen as a means
to avoid having to directly compute this bottleneck sum. 

At the expense of a slightly coarser upper bound
we can remove the requirement of an $H$-adapted basis, which will
simplify our use of Theorem \ref{thm:beth-clausen} in
recursive algorithms later. 

\begin{corollary}
\label{thm:beth-clausen-coarser}
Let $G$ be a finite group and let $H$ be a subgroup. Then we can compute a DFT with respect to $G$ at
a cost of $[G:H]$ many $H$-DFTs plus $O([G:H]^2 |H|^{\omega/2 + \epsilon})$
operations, for all $\epsilon > 0$. 
\end{corollary}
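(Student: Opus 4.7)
The plan is to invoke Theorem~\ref{thm:beth-clausen} essentially verbatim and then (i) collapse the irrep-dimension sum via Proposition~\ref{prop:prelim-inequality}, (ii) check that the $[G:H]|G|$ summation term already fits into the coarser budget, and (iii) append a change-of-basis step to remove the $H$-adapted requirement.

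First, I would run the Beth--Clausen reduction exactly as in Theorem~\ref{thm:beth-clausen}, producing the DFT in an $H$-adapted basis at the cost of $[G:H]$ many $H$-DFTs plus $[G:H]|G| + [G:H]^2 \sum_{\sigma \in \irr(H)} O(\dim(\sigma)^{\omega+\epsilon})$. Applying Proposition~\ref{prop:prelim-inequality} with $\alpha = \omega + \epsilon$ bounds the inner sum by $O(|H|^{\omega/2 + \epsilon/2})$, making the third term at most $O([G:H]^2 |H|^{\omega/2 + \epsilon})$. For the $[G:H]|G|$ term I would rewrite it as $[G:H]^2 |H|$ and observe that $|H| \le |H|^{\omega/2 + \epsilon}$ since $\omega \ge 2$, so this term is likewise absorbed into the target bound.

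Second, to drop the $H$-adapted requirement, I would post-compose with a change of basis on each irrep block. The DFT output consists of one $\dim(\rho) \times \dim(\rho)$ matrix per $\rho \in \irr(G)$; converting each from the $H$-adapted basis to the prescribed basis amounts to conjugation by a fixed (precomputed) invertible matrix, costing $O(\dim(\rho)^{\omega+\epsilon})$ per irrep. Summing over $\rho$ with Proposition~\ref{prop:prelim-inequality} gives an overall change-of-basis cost of $O(|G|^{\omega/2 + \epsilon/2})$. Writing $|G| = [G:H] \cdot |H|$ and using $\omega/2 \le 3/2 \le 2$ yields $|G|^{\omega/2} \le [G:H]^2 |H|^{\omega/2}$, so this contribution also fits into $O([G:H]^2 |H|^{\omega/2 + \epsilon})$.

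I do not anticipate any substantive obstacle; the whole proof is bookkeeping against Theorem~\ref{thm:beth-clausen} and Proposition~\ref{prop:prelim-inequality}. The one thing to track carefully is the accumulation of $\epsilon$: each invocation of Proposition~\ref{prop:prelim-inequality} inflates the exponent via the $\dim(\rho_{\max})^{\alpha - 2}$ factor, so the outer $\epsilon$ in the statement must be taken a bit larger than the $\epsilon$ fed into Theorem~\ref{thm:beth-clausen}. Since the statement quantifies over all $\epsilon > 0$, this rescaling is harmless.
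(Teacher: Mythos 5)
Your proposal is correct and follows essentially the same route as the paper's proof: invoke Theorem~\ref{thm:beth-clausen}, absorb the $[G:H]|G| = [G:H]^2|H|$ term using $|H| \le |H|^{\omega/2+\epsilon}$, bound the irrep-dimension sum via Proposition~\ref{prop:prelim-inequality}, and tack on a per-irrep change-of-basis conjugation costing $O(|G|^{\omega/2+\epsilon})$ which fits since $|G|^{\omega/2} = [G:H]^{\omega/2}|H|^{\omega/2} \le [G:H]^2|H|^{\omega/2}$. The only cosmetic difference is that the paper bundles the first two terms into a single displayed inequality; the reasoning and the handling of $\epsilon$ are the same.
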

\begin{proof}
Using Proposition \ref{prop:prelim-inequality} with $\alpha = \omega
+ \epsilon$, the cost from the statement of Theorem \ref{thm:beth-clausen} can be upper bounded by
\begin{equation}
	[G:H]|G| + [G:H]^2 \sum_{\sigma \in \irr(H)}\dim(\sigma)^{\omega+\epsilon} \leq 2[G:H]^2|H|^{\omega/2+\epsilon}.
\label{eq:beth-clausen-upper-bound}
\end{equation}
Note that in Theorem \ref{thm:beth-clausen} the DFT is with respect to an $H$-adapted basis. At a cost of 
\begin{equation}
\sum_{\rho \in \irr(G)} O(\dim(\rho)^{\omega + \epsilon}) \leq O(|G|^{\omega/2 + \epsilon})
\label{eq:change-basis}
\end{equation}
operations (again using Proposition \ref{prop:prelim-inequality} with $\alpha = \omega
+ \epsilon$), we can change an arbitrary basis to an
$H$-adapted basis, to which we apply Theorem \ref{thm:beth-clausen},
and then change back to the original basis. Both expression (\ref{eq:beth-clausen-upper-bound}) and expression
(\ref{eq:change-basis}) are upper bounded by $O([G:H]^2 |H|^{\omega/2 + \epsilon})$. 
\end{proof}

The single-subgroup reduction works best when the subgroup $H$ is
large. Lev's Theorem (Theorem \ref{thm:Lev}) guarantees a subgroup of size at
least $|G|^{1/2}$. Using this, one obtains the following recursive
algorithm, whose bound, using only that $\omega \le 3$, matches
Theorem 13.48 in the presentation
in \cite{BCS}.

\begin{theorem}
For every finite group $G$, there is an exponent $1+\omega/4$
algorithm computing the DFT with respect to $G$.
\label{thm:single-subgroup-recursive}
\end{theorem}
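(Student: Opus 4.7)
The plan is induction on $|G|$, combining Lev's Theorem (Theorem \ref{thm:Lev}) with the single-subgroup reduction of Corollary \ref{thm:beth-clausen-coarser}. When $G$ is cyclic of prime order Lev's Theorem does not apply, so as a base case I fall back on the classical abelian FFT, which runs in $O(|G|\log|G|)$ operations and is comfortably within the claimed bound since $1+\omega/4\ge 3/2$.

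For the inductive step, let $T(n)$ denote the worst-case cost on any group of order $n$, and suppose $G$ is not cyclic of prime order. Lev's Theorem furnishes a proper subgroup $H\le G$ with $|H|\ge |G|^{1/2}$, so writing $t=[G:H]$ we have $2\le t\le |G|^{1/2}$ and $|H|=|G|/t$. Applying Corollary \ref{thm:beth-clausen-coarser} yields
\[T(|G|) \;\le\; t\cdot T(|H|) \;+\; C_0\,t^{2}\,|H|^{\omega/2+\epsilon}\]
for a universal constant $C_0=C_0(\epsilon)$. The additive term equals $C_0\,t^{\,2-\omega/2-\epsilon}\,|G|^{\omega/2+\epsilon}$, and since $2-\omega/2-\epsilon>0$ for $\omega<4$ and $\epsilon$ small, it is maximized at $t=|G|^{1/2}$, giving at most $C_0\,|G|^{1+\omega/4+\epsilon/2}$. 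For the recursive term the inductive hypothesis $T(|H|)\le C\,|H|^{1+\omega/4+\epsilon}$ yields
\[t\cdot T(|H|) \;\le\; C\,|G|^{1+\omega/4+\epsilon}\;t^{-(\omega/4+\epsilon)}.\]

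Adding the two contributions gives $T(|G|)\le |G|^{1+\omega/4+\epsilon}\bigl(C\,t^{-(\omega/4+\epsilon)}+C_0\bigr)$. Since $t\ge 2$, the factor $t^{-(\omega/4+\epsilon)}$ is bounded away from $1$, so choosing $C$ large enough that $C_0\le C\bigl(1-2^{-(\omega/4+\epsilon)}\bigr)$ closes the induction. The one subtle point is the weakest case of the recursion, $t=2$, where the geometric discount is smallest; but since it is still strictly less than $1$ for any $\omega>0$, a finite $C=C(\epsilon,\omega)$ suffices, and there is no conceptual obstacle beyond this bookkeeping once Lev's Theorem and Corollary \ref{thm:beth-clausen-coarser} are in hand.
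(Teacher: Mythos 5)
Your proof is correct and follows essentially the same route as the paper: Lev's theorem to produce a subgroup of index at most $|G|^{1/2}$, Corollary~\ref{thm:beth-clausen-coarser} for the single-subgroup recursive step, and a base case of cyclic prime order handled by the abelian FFT (the paper routes this through Theorem~\ref{thm:fast-dft-supersolvable}, but that is equivalent here). The one place you diverge, and actually improve on the write-up, is in closing the induction: the paper carries an extra $\log^2 n$ factor in the inductive invariant and closes the recursion via $\log^2(n/2) < (\log n)(\log n - 1)$, whereas you prove $T(n)\le C\,n^{1+\omega/4+\epsilon}$ directly by observing that $t=[G:H]\ge 2$ forces the recursive contribution to carry a factor $t^{-(\omega/4+\epsilon)}\le 2^{-(\omega/4+\epsilon)}<1$, so a sufficiently large $C$ absorbs the additive term. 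Both analyses are sound; yours is slightly tighter and avoids the polylog bookkeeping. One small point worth flagging: your bound on the additive term requires $2-\omega/2-\epsilon>0$, which holds only for $\epsilon$ small (less than $2-\omega/2$); this is harmless since the statement is quantified over all $\epsilon>0$ and the large-$\epsilon$ case is subsumed, but it is the reason the paper caps $\delta=\min\{\epsilon,0.1\}$ before applying the corollary.
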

\begin{proof}
Fix $G$. We apply Corollary \ref{thm:beth-clausen-coarser}
recursively. 

If $G$ is a $p$-group, then we apply Theorem
\ref{thm:fast-dft-supersolvable} (actually we only need to do this
when $G$ is cyclic of prime order). If $G$ is the trivial group, then
the DFT is trivial as well. Otherwise, according to Theorem
\ref{thm:Lev}, there is a subgroup $H$ of size at least $|G|^{1/2}$,
to which we apply Corollary \ref{thm:beth-clausen-coarser}. 
	
Set $\delta= \min\{\epsilon, 0.1\}$, and give names to some constants: 
\begin{itemize}
\item Let $B_\delta$ be the constant hidden in the
  $[G:H]^2\cdot O(|H|^{\omega/2+\delta})$ notation of Corollary \ref{thm:beth-clausen-coarser}. 
\item Let $B$ be the constant hidden in the $O(|G| \log |G|)$ notation
  of Theorem \ref{thm:fast-dft-supersolvable}. 
\end{itemize} 
Let $T(n)$ denote an upper bound on the operation count of this recursive
algorithm for any group $G$ of order $n$. For each fixed $\epsilon >
0$, we will prove by induction on $n$ that, for a universal constant
$C_\epsilon$,
\[T(n) \le C_\epsilon n^{1 + \frac{\omega}{4} + \epsilon}\log^2 n.\] 
This clearly holds for the base case of a $p$-group or the trivial
group, provided $C_\epsilon > B$. 

When we apply Corollary \ref{thm:beth-clausen-coarser} recursively, the cost is
at most
\[[G:H] \cdot T(|H|) + [G:H]^2 \cdot   B_\delta|H|^{\omega/2+\delta},\]
where $|H| \ge |G|^{1/2}$. If we set $\gamma$ such that $|H| = |G|^\gamma$,
and thus $1/2 \le \gamma \le 1$, and apply the induction hypothesis, we obtain
\begin{eqnarray*}
T(n) & \le &  C_\epsilon n^{1-\gamma}n^{\gamma(1+\frac{\omega}{4} + \epsilon)}\log^2(n/2)
  +B_\delta n^{2(1 - \gamma)}n^{\gamma(\omega/2 + \delta)} \\
& < & C_\epsilon n^{1+\omega/4 + \epsilon}(\log n)(\log n - 1)
  +B_\delta n^{1+\frac{\omega}{4}+\frac{\delta}{2}} 
\end{eqnarray*}
which is at most $C_\epsilon n^{1 + \frac{\omega}{4} + \epsilon}\log^2
n$ as long as $C_\epsilon \ge B_\delta$.
\end{proof}

\section{The double subgroup reduction}
This section contains our main algorithmic result. Given two subgroups
$H,K$ of a finite group $G$, we show how to compute a DFT with
respect to $G$, via DFTs with respect to $H$ and $K$. We first show how to obtain an intermediate representation in terms of
tensor products of the irreps of $H$ and the irreps of $K$: 
\begin{lemma}
\label{lem:tensor}
Let $H$ and $K$ be subgroups of $G$ and let $c$ be an element of
$\C[G]$ supported on $HK$. Fix a way of writing $g = hk$ for each $g
\in HK$ (this is unique iff $H \cap K = \{1\}$). We can compute 
\[\sum_{g = hk \in HK} c_{g} \bigoplus_{\sigma \in \irr(H), \tau \in
  \irr(K)}\sigma(h) \otimes \tau(k),\]
by performing $|H|$ many $K$-DFTs and $|K|$ many $H$-DFTs.  
\end{lemma}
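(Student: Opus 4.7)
The plan is to view the target sum as a two–variable sum over $H \times K$ and factor it into a $K$-DFT stage followed by an $H$-DFT stage, exploiting the bilinearity of the tensor product.

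First I would reindex. Since for each $g \in HK$ we have fixed a decomposition $g = hk$, define $d_{h,k} = c_g$ if this decomposition equals $(h,k)$, and $d_{h,k}=0$ otherwise. The $(\sigma,\tau)$-block of the target is then simply
\[
T_{\sigma,\tau} \;=\; \sum_{h \in H, \, k \in K} d_{h,k}\, \sigma(h) \otimes \tau(k),
\]
with no reference to the non-uniqueness of the decomposition.

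Next I would factor the sum via bilinearity. For each fixed $h \in H$, set $M_{h,\tau} := \sum_{k \in K} d_{h,k}\, \tau(k)$. Computing the collection $\{M_{h,\tau}\}_{\tau \in \irr(K)}$ for one $h$ is precisely a $K$-DFT applied to the element $\sum_{k} d_{h,k} k \in \C[K]$. Doing this for every $h \in H$ costs $|H|$ many $K$-DFTs and yields
\[
T_{\sigma,\tau} \;=\; \sum_{h \in H} \sigma(h) \otimes M_{h,\tau}.
\]

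Finally I would unpack this second sum entrywise. The $((i,i'),(j,j'))$-entry of $T_{\sigma,\tau}$ equals
\[
\sum_{h \in H} \sigma(h)[i,j]\, M_{h,\tau}[i',j'],
\]
which for fixed $(\tau,i',j')$ is nothing but the $(i,j)$-entry of the $\sigma$-block of the $H$-DFT of the scalar-valued function $h \mapsto M_{h,\tau}[i',j']$. Since there are $\dim(\tau)^2$ choices of $(i',j')$ for each $\tau$, this stage performs $\sum_{\tau \in \irr(K)} \dim(\tau)^2 = |K|$ many $H$-DFTs, matching the claimed total.

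The only conceptual step that requires any care is recognizing that the entries of the matrix-valued intermediates $M_{h,\tau}$ can be separated into $\dim(\tau)^2$ independent scalar $H$-DFTs in the second stage, so that the count collapses via $\sum_\tau \dim(\tau)^2 = |K|$. Everything else — the validity of the reindexing by $d_{h,k}$, and the bilinearity identity $\sigma(h) \otimes M_{h,\tau} = \sum_k d_{h,k}\, \sigma(h) \otimes \tau(k)$ — is routine bookkeeping.
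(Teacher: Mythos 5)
Your proof is correct and follows essentially the same route as the paper's: perform a $K$-DFT on each of the $|H|$ slices indexed by $h$, then treat each of the $\sum_{\tau}\dim(\tau)^2 = |K|$ scalar entries of the resulting matrices as a function on $H$ and apply an $H$-DFT. Your introduction of the coefficients $d_{h,k}$ merely makes explicit the "fix a way of writing $g = hk$" convention that the paper folds into its definition of $c^{(h)}_k$; otherwise the two arguments, including the entrywise Kronecker-product bookkeeping, coincide.
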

\begin{proof}
We can write \[c = \sum_{g \in G}c_gg = \sum_{h \in H} h \cdot \left (\sum_{k \in
    K}c^{(h)}_k k \right )\]
for some elements $c^{(h)} = 
\left ( \sum_{k \in K} c^{(h)}_kk \right ) \in \C[K]$. We perform $|H|$ many $K$-DFTs to
compute for each $h \in H$:
\[s_h = \sum_{k \in K} c^{(h)}_k \bigoplus_{\tau \in \irr(K)} \tau(k).\]
We use the notation $s_h[\tau, u, v]$ to refer to entry $(u,v)$ of component $\tau$
in the direct sum.
Then we perform $|K|$ many $H$-DFTs to compute for each $\tau \in
\irr(K)$ and $u, v \in [\dim(\tau)]$, 
\[t_{\tau, u, v} = \sum_{h \in H} s_h[\tau, u, v] \bigoplus_{\sigma \in
  \irr(H)} \sigma(h).\]
Note that $t_{\tau, u, v}[\sigma, x, y]$ is the $((x,u), (y, v))$
entry of 
$\sum_{h, k} c^{(h)}_k \sigma(h) \otimes \tau(k)$
and note that $c^{(h)}_k = c_{hk}$, so we have computed:
\[\sum_{h,k} c_{hk} \bigoplus_{\sigma \in \irr(H), \tau \in
  \irr(K)}\sigma(h) \otimes \tau(k)\]
as promised. 
\end{proof}

The following is an important (and known) general observation (see, e.g., Lemma
4.3.1 in \cite{HJ}):  
\begin{lemma}
\label{lem:mat-mult}
If $A$ is an $n_1 \times n_2$ matrix, $B$ is an $n_2 \times n_3$ matrix, and $C$ is an
$n_3 \times n_4$ matrix, then
the product $ABC$ can be computed by multiplying $A \otimes C^T$
(which is an $n_1n_4 \times n_2n_3$ matrix) by $B$ viewed as an
$n_2n_3$-vector. 
\end{lemma}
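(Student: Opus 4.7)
The plan is a direct entry-by-entry verification that exploits the paper's indexing convention $\bigl(A \otimes B\bigr)[(i,i'),(j,j')] = A[i,j]\,B[i',j']$. I would view the output $ABC$ as an $n_1 n_4$-vector indexed by pairs $(i,k') \in [n_1]\times[n_4]$ and view the input $B$ as an $n_2 n_3$-vector indexed by pairs $(j,k) \in [n_2]\times[n_3]$ with $(j,k)$-th entry equal to $B[j,k]$. Correspondingly, I take the rows of $A \otimes C^T$ to be indexed by $(i,k')$ and the columns by $(j,k)$.

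The first step is simply to unfold a single entry of the Kronecker product under this convention:
\[
\bigl(A \otimes C^T\bigr)[(i,k'),(j,k)] \;=\; A[i,j]\cdot C^T[k',k] \;=\; A[i,j]\cdot C[k,k'].
\]
The second step is to write down the $(i,k')$-th entry of the matrix-vector product and recognize it as a double matrix product:
\[
\sum_{j,k} A[i,j]\,C[k,k']\,B[j,k] \;=\; \sum_{j} A[i,j]\sum_{k} B[j,k]\,C[k,k'] \;=\; (ABC)[i,k'],
\]
by two applications of the usual definition of matrix multiplication. Reshaping the resulting $n_1 n_4$-vector into an $n_1 \times n_4$ matrix then recovers $ABC$.

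There is no substantive obstacle here; the only care needed is bookkeeping, namely making sure that the transpose falls on $C$ rather than on $A$. Under the paper's Kronecker convention, the second coordinate $k'$ of the row index pairs with the second coordinate $k$ of the column index via the second tensor factor, so to contract against $B[j,k]$ and produce an entry of $BC$ indexed by $k'$, the factor must be $C^T[k',k]=C[k,k']$. Once the indexing is pinned down, the lemma is just the observation that the same triple sum $\sum_{j,k} A[i,j]B[j,k]C[k,k']$ can be grouped either as $(ABC)[i,k']$ or as a single inner product of an $n_2 n_3$-vector with a row of $A\otimes C^T$.
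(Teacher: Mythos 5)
Your proof is correct and follows the same approach as the paper: a direct entry-by-entry verification using the Kronecker indexing convention, showing that both expressions reduce to the triple sum $\sum_{j,k} A[i,j]B[j,k]C[k,k']$. The only difference is cosmetic (index names $i,j,k,k'$ versus the paper's $i_1,i_2,i_3,i_4$).
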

\begin{proof}
Observe that
\[(ABC)[i_1, i_4] = \sum_{i_2, i_3} A[i_1, i_2]B[i_2, i_3]C[i_3, i_4]\]
and
\[((A \otimes C^T)\cdot B)[(i_1,i_4)] = \sum_{i_2, i_3} (A \otimes C^T)[(i_1,
i_4), (i_2, i_3)]B[(i_2, i_3)] = \sum_{i_2, i_3} A[i_1, i_2]C[i_3,
i_4]B[i_2, i_3].\]
\end{proof}
This $n_1n_4 \times n_2n_3$-matrix-vector
multiplication costs $O(n_1n_4n_2n_3)$ operations. More importantly,
we have: 
\begin{corollary}
\label{cor:repeated-mat-mult}
If $A$ and $C$ are as above, and square (so $n_1 = n_2$ and $n_3 = n_4$), and we have several $n_2 \times n_3$
matrices, $B_1, B_2, \ldots, B_\ell$, then we can compute $AB_iC$ for
all $i$ from $A \otimes C^T$, at a cost of \[O((n_2n_3)^{\omega - 1 + \epsilon} \cdot \max \{n_2n_3, \ell\}).\]
operations, for all $\epsilon > 0$. 
\end{corollary}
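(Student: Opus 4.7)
The plan is to reduce the problem to a single rectangular matrix multiplication and then handle the two size regimes. By Lemma \ref{lem:mat-mult}, each individual product $AB_iC$ equals $(A\otimes C^T)\cdot \text{vec}(B_i)$, where $\text{vec}(B_i)$ is $B_i$ unrolled as a column vector of length $n_2 n_3$. Since $A$ and $C$ are square, the matrix $M := A \otimes C^T$ has dimensions $n_2 n_3 \times n_2 n_3$. So the task becomes: given the $n_2 n_3 \times n_2 n_3$ matrix $M$ and $\ell$ column vectors each of length $n_2 n_3$, compute all $\ell$ products.

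The key step is to batch the vectors. Let $V$ be the $n_2 n_3 \times \ell$ matrix whose columns are $\text{vec}(B_1),\ldots,\text{vec}(B_\ell)$; then $MV$ contains all the desired outputs (after reshaping columns back to $n_2 \times n_3$ matrices). I split into cases on $\ell$ versus $n_2 n_3$. If $\ell \le n_2 n_3$, pad $V$ with zero columns to make it square of side $n_2 n_3$ and apply one square matrix multiplication at cost $O((n_2 n_3)^{\omega+\epsilon}) = O((n_2 n_3)^{\omega-1+\epsilon}\cdot n_2 n_3)$. If $\ell > n_2 n_3$, slice $V$ into $\lceil \ell/(n_2 n_3)\rceil$ blocks of width $n_2 n_3$ (padding the last block if necessary) and perform that many square multiplications of side $n_2 n_3$, for a total cost of $O(\ell \cdot (n_2 n_3)^{\omega-1+\epsilon})$. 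In both cases the bound matches the claimed $O((n_2 n_3)^{\omega-1+\epsilon}\cdot \max\{n_2 n_3,\ell\})$.

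There is no real obstacle here; the only delicate point is justifying that one really can exploit fast square matrix multiplication on the batched right-hand side (rather than paying $n_2 n_3$ times the cost of a matrix-vector product), and that is handled cleanly by the padding/slicing argument above. Note also that forming $M = A \otimes C^T$ itself costs $O((n_2 n_3)^2)$ operations, which is dominated by the matrix-multiplication cost above, so it is correctly charged as ``from $A\otimes C^T$'' without affecting the bound.
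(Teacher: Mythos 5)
Your argument is essentially identical to the paper's: apply Lemma \ref{lem:mat-mult} to view each $AB_iC$ as multiplying the $n_2n_3 \times n_2n_3$ matrix $A \otimes C^T$ against a vectorized $B_i$, then batch the $\ell$ vectors into a matrix and either pad (when $\ell \le n_2n_3$) to one square multiplication, or slice into $\lceil \ell/(n_2n_3)\rceil$ square multiplications (when $\ell > n_2n_3$). Correct, and your remark about the $O((n_2n_3)^2)$ cost of forming $A\otimes C^T$ being dominated is a fine extra observation, though the statement's ``from $A \otimes C^T$'' indicates that matrix is already in hand.
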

\begin{proof}
Set $N = n_1n_4 = n_2n_3$. If $\ell \le N$, then this can be
accomplished with a single $N \times N$ matrix multiplication, at a cost of $O(N^{\omega +
  \epsilon})$ operations, by the definition of $\omega$. If $\ell > N$, then
this can be accomplished with $\lceil \ell/N \rceil$ many $N \times N$
matrix multiplications, at a cost of $O(\ell\cdot N^{\omega - 1 +
  \epsilon})$ operations.
\end{proof}

Now we show how to lift from the intermediate representation to the
space of irreducibles of $G$. We need some notation. For $\sigma \in \irr(H), \tau \in \irr(K), \rho
\in \irr(G)$, let $n_{\sigma, \rho}$ be the number of occurences of $\sigma$
in the restriction of $\rho$ to $H$, and let $m_{\tau, \rho}$ be the
number of occurences of $\tau$ in the restriction of $\rho$ to
$K$.

\begin{lemma}
\label{lem:lift}
There is a linear map
\[\phi_{G, H, K}:\prod_{\sigma \in \irr(H), \tau \in \irr(K)} \C^{(\dim(\sigma)
  \dim(\tau))^2} \rightarrow \prod_{\rho
  \in \irr(G)} \C^{\dim(\rho)^2}\] that maps
$\bigoplus_{\sigma \in \irr(H), \tau \in \irr(K)}\sigma(h) \otimes \tau(k)$
to 
$\bigoplus_{\rho \in \irr(G)} \rho(hk)$
for all $h \in H, k \in K$. Map $\phi_{G, H, K}$ can be computed using
\begin{eqnarray*}
\sum_{\sigma \in \irr(H), \tau \in \irr(K)} O\left
  ((\dim(\sigma)\dim(\tau))^{\omega -1 + \epsilon}\cdot \max \left \{\dim(\sigma)\dim(\tau), \sum_{\rho \in \irr{G}} n_{\sigma,
    \rho}m_{\tau,\rho} \right \}\right ) \\
+ \sum_{\rho \in
\irr(G)}O(\dim(\rho)^{\omega + \epsilon})
\end{eqnarray*}
operations, for all $\epsilon > 0$. 
\end{lemma}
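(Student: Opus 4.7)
The plan is to reduce the computation of $\phi_{G,H,K}$ to a collection of structured triple products $\sigma(h)\cdot Q\cdot\tau(k)$ with fixed middle matrices $Q$, and then dispatch each batch to Corollary \ref{cor:repeated-mat-mult}. First I would fix, for each $\rho \in \irr(G)$, both an $H$-adapted basis (with indices $(\sigma,a,x)$, $a\in[n_{\sigma,\rho}]$, $x\in[\dim(\sigma)]$, making $\rho(h)$ block-diagonal with blocks $\sigma(h)$) and a $K$-adapted basis (with indices $(\tau,b,y)$, making $\rho(k)$ block-diagonal with blocks $\tau(k)$). Let $Q_\rho$ be the $\dim(\rho)\times\dim(\rho)$ change-of-basis matrix from $K$-adapted to $H$-adapted coordinates, and let $Q_\rho^{\sigma,\tau,a,b}$ denote its $\dim(\sigma)\times\dim(\tau)$ submatrix at rows $(\sigma,a,\cdot)$ and columns $(\tau,b,\cdot)$. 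These depend only on $G,H,K$ and the basis choices, so I treat them as precomputed constants not contributing to the operation count.

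The central identity I would then verify is that when $\rho(hk)=\rho(h)\rho(k)$ is read with $H$-adapted rows and $K$-adapted columns, its $((\sigma,a),(\tau,b))$ block is
\[\rho(hk)[(\sigma,a,\cdot),(\tau,b,\cdot)] = \sigma(h)\cdot Q_\rho^{\sigma,\tau,a,b}\cdot\tau(k).\]
This follows directly from the factorization $[\rho(h)]_H\cdot Q_\rho\cdot[\rho(k)]_K$ of $\rho(hk)$ in mixed coordinates, since the outer factors are block-diagonal with the promised blocks. For each fixed pair $(\sigma,\tau)$, the task thus reduces to computing $\sigma(h)\cdot Q_\rho^{\sigma,\tau,a,b}\cdot\tau(k)$ for all valid $\rho,a,b$, a total of $\ell_{\sigma,\tau}:=\sum_{\rho\in\irr(G)}n_{\sigma,\rho}m_{\tau,\rho}$ triple products sharing the same outer matrices $\sigma(h)$ and $\tau(k)$. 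I would invoke Corollary \ref{cor:repeated-mat-mult} with $A=\sigma(h)$, $C=\tau(k)$, and the $B_i$'s ranging over the $Q_\rho^{\sigma,\tau,a,b}$'s, at cost $O((\dim(\sigma)\dim(\tau))^{\omega-1+\epsilon}\cdot\max\{\dim(\sigma)\dim(\tau),\ell_{\sigma,\tau}\})$; summing over $(\sigma,\tau)$ reproduces the first displayed term of the claimed bound.

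Corollary \ref{cor:repeated-mat-mult} is phrased in terms of the input $A\otimes C^T=\sigma(h)\otimes\tau(k)^T$, whereas we are handed $\sigma(h)\otimes\tau(k)$. These two matrices contain exactly the same scalars: $(\sigma(h)\otimes\tau(k))[(x,y_2),(x_2,y)]=\sigma(h)[x,x_2]\tau(k)[y_2,y]=(\sigma(h)\otimes\tau(k)^T)[(x,y),(x_2,y_2)]$, so the conversion is a pure relabeling of index positions and costs nothing in the arithmetic circuit model. After running the corollary for every $(\sigma,\tau)$ pair, I have $\rho(hk)$ in the mixed (row $H$-adapted, column $K$-adapted) basis for every $\rho$; a final multiplication by the precomputed $Q_\rho^{-1}$ converts to a uniform basis at cost $O(\dim(\rho)^{\omega+\epsilon})$ per $\rho$, which matches the second term exactly.

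The main obstacle I expect is not the complexity accounting but the careful setup of the block identity: fixing the $H$-adapted and $K$-adapted bases, introducing $Q_\rho$ and its submatrices $Q_\rho^{\sigma,\tau,a,b}$, and verifying that $\rho(hk)$ in the mixed basis literally factors block-by-block as $\sigma(h)\cdot Q\cdot\tau(k)$. The accompanying Kronecker relabeling from $\sigma(h)\otimes\tau(k)$ to $\sigma(h)\otimes\tau(k)^T$ is minor but easy to botch. Once both are in place, the claimed operation count falls out of Corollary \ref{cor:repeated-mat-mult} together with Proposition \ref{prop:prelim-inequality} applied to the final change of basis.
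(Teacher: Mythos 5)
Your proposal is correct and takes essentially the same route as the paper: the paper introduces change-of-basis matrices $S$, $T$ from the $H$-adapted and $K$-adapted multiset decompositions to a fixed $\irr(G)$ basis, sets $M=S^{-1}T$, batches the products $\sigma(h)B_i\tau(k)$ over sub-blocks $B_i$ of $M$ via Corollary \ref{cor:repeated-mat-mult}, and finishes by multiplying by $S$ and $T^{-1}$; your $Q_\rho$ is precisely the diagonal block of $M$ for $\rho$ and your single right-multiplication by $Q_\rho^{-1}$ plays the same role as the paper's two outer multiplications. Your explicit note that $\sigma(h)\otimes\tau(k)$ and $\sigma(h)\otimes\tau(k)^T$ differ only by a cost-free index relabeling fills in a detail the paper leaves implicit.
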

\begin{proof}
Let $\irr^*(H)$ be the multiset of irreducibles of $H$ in the
multiplicities that they occur in the restrictions to $H$ of $\irr(G)$, and let $\irr^*(K)$ be
the multiset of irreducibles of $K$ in the multiplicities that they
occur in the restrictions to $K$ of $\irr(G)$.
Let $S$ be the change of basis matrix taking
$\oplus_{\sigma \in \irr^*(H)} \sigma$ to $\oplus_{\rho \in
  \irr(G)} \rho$ and
let $T$ be the change of basis matrix taking $\oplus_{\tau \in
  \irr^*(K)} \tau$ to $\oplus_{\rho \in \irr(G)} \rho$.
Then for each $h \in H, k \in K$, we have 
\[S \left (\bigoplus_{\sigma \in \irr^*(H)} \sigma(h) \right )
S^{-1}T
\left (\bigoplus_{\tau \in \irr^*(K)}\tau(k)\right ) T^{-1} = \bigoplus_{\rho \in
  \irr(G)} \rho(hk).\]
Set $M = S^{-1}T$, and consider the expression
\begin{equation}
\label{eq:big-product}
\left (\bigoplus_{\sigma \in \irr^*(H)} \sigma(h) \right ) M
\left (\bigoplus_{\tau \in \irr^*(K)}\tau(k)\right ).
\end{equation}
Note that both $M$ and the above product are block-diagonal matrices with
blocks of dimension $\dim(\rho)$ as $\rho$ runs through
$\irr(G)$. Now, for each $\rho \in \irr(G)$, a given $\sigma \in
\irr(H)$ occurs $n_{\sigma, \rho}$ times and a given $\tau \in
  \irr(K)$  occurs $m_{\tau, \rho}$ times;  therefore we are
computing $\sigma(h)B_i\tau(k)$ for $p$ distinct sub-matrices $B_i$
of $M$, where $p = \sum_{\rho
\in \irr(G)}n_{\sigma, \rho}m_{\tau, \rho}$.
By Corollary \ref{cor:repeated-mat-mult}, each such batch can be computed by taking
a product of $\sigma(h)\otimes\tau(k)^T$ with a matrix whose columns
are the $B_i$ sub-matrices, viewed as vectors. This is linear in the
entries of $\sigma(h)\otimes\tau(k)$, and costs 
\[O\left ((\dim(\sigma)\dim(\tau))^{\omega - 1 +
  \epsilon}\cdot\max\left \{\dim(\sigma)\dim(\tau), \sum_{\rho \in
  \irr(G)}n_{\sigma, \rho}m_{\tau, \rho}\right\}\right )\]
operations. 
Finally, we need to multiply (\ref{eq:big-product}) by $S$ on the left
and $T^{-1}$ on the right; both maps are linear in the entries of 
$\bigoplus_{\sigma \in \irr(H), \tau \in \irr(K)}
  \sigma(h)\otimes \tau(k),$
and as block-diagonal matrix multiplications, both cost $\sum_{\rho \in \irr(G)}O(\dim(\rho)^{\omega +
  \epsilon})$ operations.  
\end{proof}

Now we use elementary facts from representation theory to bound the complexity estimate in Lemma
\ref{lem:lift} in terms of $|H|, |K|, |G|$. 
\begin{lemma} For all finite groups $G$ and subgroups $H, K$, the expression
\begin{eqnarray*}
\sum_{\sigma \in \irr(H), \tau \in \irr(K)} O\left
  ((\dim(\sigma)\dim(\tau))^{\omega -1 + \epsilon}\cdot \max \left \{\dim(\sigma)\dim(\tau), \sum_{\rho \in \irr{G}} n_{\sigma,
    \rho}m_{\tau,\rho} \right \}\right ) \\
+ \sum_{\rho \in
\irr(G)}O(\dim(\rho)^{\omega + \epsilon})
\end{eqnarray*}
is upper bounded by 
$O((|H||K|)^{\omega/2 + \epsilon/2} + |G|^{\omega/2 + \epsilon/2}).$ 
\end{lemma}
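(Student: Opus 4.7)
The plan is to use the inequality $\max\{A,B\} \le A+B$ to split the main sum into two parts, then bound each via Proposition \ref{prop:prelim-inequality} and a single key identity coming from Frobenius reciprocity. The trailing term $\sum_{\rho}O(\dim(\rho)^{\omega+\epsilon})$ is immediately bounded by $O(|G|^{\omega/2+\epsilon/2})$ by the proposition with $\alpha=\omega+\epsilon$, so this only leaves the double sum over $\sigma,\tau$.

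For the part coming from $A=\dim(\sigma)\dim(\tau)$, the summand is $(\dim(\sigma)\dim(\tau))^{\omega+\epsilon}$, and the sum factors as
\[\Bigl(\sum_{\sigma\in\irr(H)}\dim(\sigma)^{\omega+\epsilon}\Bigr)\Bigl(\sum_{\tau\in\irr(K)}\dim(\tau)^{\omega+\epsilon}\Bigr)\le |H|^{(\omega+\epsilon)/2}|K|^{(\omega+\epsilon)/2}=(|H||K|)^{\omega/2+\epsilon/2},\]
again by Proposition \ref{prop:prelim-inequality} applied separately to $\irr(H)$ and $\irr(K)$.

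For the part coming from $B=\sum_\rho n_{\sigma,\rho}m_{\tau,\rho}$, I would factor $(\dim(\sigma)\dim(\tau))^{\omega-1+\epsilon}$ as $(\dim(\sigma)\dim(\tau))^{\omega-2+\epsilon}\cdot\dim(\sigma)\dim(\tau)$, pull the first factor out as its max value, and apply the identity
\[\sum_{\sigma,\tau}\dim(\sigma)\dim(\tau)\sum_\rho n_{\sigma,\rho}m_{\tau,\rho}=\sum_\rho \Bigl(\sum_\sigma n_{\sigma,\rho}\dim(\sigma)\Bigr)\Bigl(\sum_\tau m_{\tau,\rho}\dim(\tau)\Bigr)=\sum_\rho \dim(\rho)^2=|G|,\]
where the middle equality uses that $\sum_\sigma n_{\sigma,\rho}\dim(\sigma)=\dim(\rho|_H)=\dim(\rho)$ and similarly for $K$. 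Since $\dim(\sigma)\le|H|^{1/2}$ and $\dim(\tau)\le|K|^{1/2}$, the max is bounded by $(|H||K|)^{(\omega-2+\epsilon)/2}$, so this part is at most $|G|\cdot(|H||K|)^{(\omega-2+\epsilon)/2}$.

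Finally, to absorb this hybrid expression into the target bound, I would consider two cases: if $|H||K|\ge |G|$, then $|G|\cdot(|H||K|)^{(\omega-2+\epsilon)/2}\le (|H||K|)^{\omega/2+\epsilon/2}$; if $|H||K|\le |G|$, then it is at most $|G|\cdot|G|^{(\omega-2+\epsilon)/2}=|G|^{\omega/2+\epsilon/2}$. In either case the bound fits within $O((|H||K|)^{\omega/2+\epsilon/2}+|G|^{\omega/2+\epsilon/2})$, and combining with the Part 1 and trailing-sum bounds gives the claim. The main conceptual step is recognizing the identity $\sum_{\sigma,\tau,\rho}\dim(\sigma)\dim(\tau)n_{\sigma,\rho}m_{\tau,\rho}=|G|$ and then factoring the $(\dim(\sigma)\dim(\tau))^{\omega-1+\epsilon}$ weight so exactly one power of $\dim(\sigma)\dim(\tau)$ is consumed by this identity; the rest is bookkeeping.
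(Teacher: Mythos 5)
Your proof is correct and follows essentially the same route as the paper: replace the max with a sum, bound the $\dim(\sigma)\dim(\tau)$ branch by factoring into a product of two sums and applying Proposition~\ref{prop:prelim-inequality} twice, and bound the other branch by peeling off one power of $\dim(\sigma)\dim(\tau)$ (bounding the residual $(\dim(\sigma)\dim(\tau))^{\omega-2+\epsilon}$ by $(|H||K|)^{(\omega-2+\epsilon)/2}$) so that $\sum_{\sigma,\tau,\rho}\dim(\sigma)\dim(\tau)n_{\sigma,\rho}m_{\tau,\rho}=\sum_\rho\dim(\rho)^2=|G|$ applies, finishing with the case split on $|H||K|$ versus $|G|$. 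One small terminological note: the identity $\sum_\sigma n_{\sigma,\rho}\dim(\sigma)=\dim(\rho)$ is simply the dimension count for the restriction $\rho|_H$; it is not Frobenius reciprocity, which is the distinct fact that the multiplicity of $\sigma$ in $\rho|_H$ equals the multiplicity of $\rho$ in $\mathrm{Ind}_H^G\sigma$.
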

\begin{proof}
We use only the fact that for each $\rho \in \irr(G)$,
\begin{equation}
\sum_{\sigma \in \irr(H)} \dim(\sigma)n_{\sigma, \rho} =
  \dim(\rho),
\label{eq:elem1}
\end{equation}
and similarly
\begin{equation}
\sum_{\tau \in \irr(K)} \dim(\tau)m_{\tau, \rho} =
  \dim(\rho),
\label{eq:elem2}
\end{equation}
together with the fact that the sum of the squares of the dimensions
of the irreps of a group is the order of that group (which implies
that the maximum dimension is at most the square root of the order of
the group). 

We observe that by replacing the ``max'' with addition,
\begin{align*}
\sum_{\sigma \in \irr(H), \tau \in \irr(K)} &  O\left
  ((\dim(\sigma)\dim(\tau))^{\omega - 1 + \epsilon}
\cdot \max \left \{\dim(\sigma)\dim(\tau), \sum_{\rho \in \irr{G}} n_{\sigma,
    \rho}m_{\tau,\rho} \right \}\right ) \\
\le & 
 \sum_{\sigma \in \irr(H), \tau \in \irr(K)} O\left
   ((\dim(\sigma)\dim(\tau))^{\omega - 1 + \epsilon}
\cdot \left (\dim(\sigma)\dim(\tau) + \sum_{\rho \in \irr{G}} n_{\sigma,
    \rho}m_{\tau,\rho} \right )\right )
\end{align*}
We know that  
\begin{align*}
\sum_{\sigma \in \irr(H), \tau \in \irr(K)} &
(\dim(\sigma)\dim(\tau))^{\omega - 1 + \epsilon}
\cdot\dim(\sigma)\dim(\tau) \\  & =  \left ( \sum_{\sigma \in
  \irr(H)}\dim(\sigma)^{\omega + \epsilon} \right ) \cdot \left (
\sum_{\tau \in \irr(K)}\dim(\tau)^{\omega + \epsilon} \right )  \le
(|H||K|)^{\omega/2 + \epsilon/2}.
\end{align*}
where the last inequality applied Proposition
\ref{prop:prelim-inequality} twice, with $\alpha=\omega + \epsilon$.
Also, we know that 
\begin{align*}
\sum_{\sigma \in \irr(H), \tau \in \irr(K)}  &
(\dim(\sigma)\dim(\tau))^{\omega - 1 + \epsilon} 
\cdot \left (\sum_{\rho \in \irr{G}} n_{\sigma,
    \rho}m_{\tau,\rho} \right ) \\
= & 
\sum_{\rho \in \irr(G)} \left (\sum_{\sigma \in \irr(H)}
    \dim(\sigma)^{\omega - 1 + \epsilon}n_{\sigma, \rho} \right)\cdot \left (\sum_{\tau \in \irr(K)}
    \dim(\tau)^{\omega - 1 + \epsilon}m_{\tau, \rho} \right ) \\ 
\le & \sum_{\rho \in \irr(G)} \left (|H|^{(\omega - 2 + \epsilon)/2}\cdot\sum_{\sigma \in \irr(H)}
    \dim(\sigma) n_{\sigma, \rho} \right)\cdot \left (|K|^{(\omega - 2
      + \epsilon)/2}\cdot \sum_{\tau \in \irr(K)}
    \dim(\tau)m_{\tau, \rho} \right ) \\
 = & \sum_{\rho \in \irr(G)} |H|^{(\omega - 2 +
    \epsilon)/2}|K|^{(\omega - 2
      + \epsilon)/2} \dim(\rho)^2 = (|H||K|)^{(\omega - 2 +
    \epsilon)/2} |G|
\end{align*}
where the second-to-last equality used (\ref{eq:elem1}) and (\ref{eq:elem2}). If
$|H||K| \le |G|$ then this expression is at most $|G|^{\omega/2 +
  \epsilon/2}$; if $|H||K| > |G|$ then this expression is at most
$(|H||K|)^{\omega/2 + \epsilon/2}$. Finally, we have that the final
term in the main expression, $\sum_{\rho
  \in \irr(G)} O(\dim(\rho)^{\omega + \epsilon})$, is at most  $O(|G|^{\omega/2 +
  \epsilon/2})$, by Proposition \ref{prop:prelim-inequality} with
$\alpha = \omega + \epsilon$, and the lemma follows. 
\end{proof}

Our main theorems put everything together:
\begin{theorem}
\label{thm:translated-HK}
Let $G$ be a finite group and let $H, K$ be subgroups and let $x \in
G$ be any element. Fix a way of writing $g = hk$ for each $g
\in HK$ (this is unique iff $H \cap K = \{1\}$). Let $c \in \C[G]$ be supported on
$HKx$. Then we can compute 
\[\sum_{g =hkx \in HKx} c_g \cdot \bigoplus_{\rho \in \irr(G)}\rho(g)\]
at the cost of
$|H|$ many $K$-DFTS, $|K|$ many $H$-DFTs, plus $O(|G|^{\omega/2 +
  \epsilon} + (|H||K|)^{\omega/2 + \epsilon})$ operations, for all
$\epsilon > 0$. 
\end{theorem}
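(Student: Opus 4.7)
The plan is to reduce to the already-handled case of $c$ supported on $HK$ by translating by $x^{-1}$, then assemble Lemma \ref{lem:tensor}, Lemma \ref{lem:lift}, and the preceding bound on $\phi_{G,H,K}$, finishing with a block-diagonal multiplication by $\bigoplus_\rho \rho(x)$ on the right.

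Concretely, first I would define $\tilde c \in \C[G]$ supported on $HK$ by $\tilde c_{hk} = c_{hkx}$; this is simply a relabeling of coefficients and costs nothing. Next I apply Lemma \ref{lem:tensor} to $\tilde c$ with the chosen factorization $g = hk$, which produces the intermediate tensor-product representation
\[\sum_{hk \in HK} \tilde c_{hk}\bigoplus_{\sigma \in \irr(H), \tau \in \irr(K)} \sigma(h)\otimes \tau(k)\]
at a cost of $|H|$ many $K$-DFTs and $|K|$ many $H$-DFTs, exactly as claimed.

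Then I apply the linear map $\phi_{G,H,K}$ from Lemma \ref{lem:lift} to this intermediate output; by linearity (and by the defining property of $\phi_{G,H,K}$), the result is
\[\sum_{hk \in HK} \tilde c_{hk} \bigoplus_{\rho \in \irr(G)} \rho(hk) = \sum_{hk \in HK} c_{hkx} \bigoplus_{\rho \in \irr(G)} \rho(hk).\]
By the complexity bound for $\phi_{G,H,K}$ proved in the previous lemma, this step costs $O(|G|^{\omega/2+\epsilon} + (|H||K|)^{\omega/2+\epsilon})$ operations.

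The remaining step is to right-multiply each $\rho$-block by $\rho(x)$, using $\rho(hkx) = \rho(hk)\rho(x)$, to obtain the desired sum. This is a block-diagonal matrix multiplication by the fixed matrix $\bigoplus_{\rho \in \irr(G)} \rho(x)$ with blocks of size $\dim(\rho)$, and costs $\sum_{\rho \in \irr(G)} O(\dim(\rho)^{\omega+\epsilon})$, which is at most $O(|G|^{\omega/2+\epsilon})$ by Proposition \ref{prop:prelim-inequality} with $\alpha = \omega + \epsilon$. Adding the three contributions yields the stated bound. There is no real obstacle here — everything reduces to already-proved lemmas, and the only point requiring care is that the final translation by $x$ must be done as a block-diagonal multiplication on the output side rather than being folded into Lemma \ref{lem:tensor}, so that the $K$- and $H$-DFTs are genuinely DFTs for those subgroups applied to coefficient vectors indexed by $H$ and $K$ respectively.
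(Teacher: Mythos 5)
Your proposal is correct and follows essentially the same route as the paper: translate by $x^{-1}$ to reduce to support $HK$, apply Lemma~\ref{lem:tensor} for the tensor-product intermediate, lift via $\phi_{G,H,K}$ using Lemma~\ref{lem:lift} and its complexity bound, and finish with the block-diagonal right multiplication by $\bigoplus_\rho \rho(x)$, bounded via Proposition~\ref{prop:prelim-inequality}. Your accounting of the final multiplication as $\sum_\rho O(\dim(\rho)^{\omega+\epsilon})$ is in fact slightly more careful than the paper's $\sum_\rho \dim(\rho)^2$, but both are dominated by $O(|G|^{\omega/2+\epsilon})$, so nothing changes.
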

\begin{proof}
Set $c'_{g} = c_{gx}$ and notice that $c'$ is supported on $HK$. Apply Lemma \ref{lem:tensor} on $c'$ to compute
\[\sum_{g =hk \in HK} c'_{g} \bigoplus_{\sigma \in \irr(H), \tau \in
  \irr(K)}\sigma(h) \otimes \tau(k).\]
Next, apply the linear map $\phi_{G, H, K}$ to obtain (by
linearity) $\sum_{g =hk \in HK} c'_{g} \bigoplus_{\rho \in \irr(G)}
\rho(hk)$, and finally, multiply by $\oplus_{\rho \in \irr(G)}
\rho(x)$ on the right, at a cost of $\sum_{\rho \in
  \irr(G)}\dim(\rho)^2 \le O(|G|^{\omega/2 + \epsilon})$
operations (by Proposition \ref{prop:prelim-inequality} with $\alpha =
\omega + \epsilon$). The result is
\[\sum_{g =hk \in HK} c'_{g} \bigoplus_{\rho \in \irr(G)}
\rho(gx) = \sum_{g' \in HKx} c_{g'} \bigoplus_{\rho \in \irr(G)}\rho(g'),\]
as promised. 
\end{proof}

By translating $HK$ around, we
cover all of $G$, leading to our main theorem:
\begin{theorem}[main]
\label{thm:main}
Let $G$ be a finite group and let $H, K$ be subgroups. Then we can compute the DFT with respect to $G$ at the cost of
$|H|$ many $K$-DFTS, $|K|$ many $H$-DFTs, plus $O(|G|^{\omega/2 +
  \epsilon} + (|H||K|)^{\omega/2 + \epsilon})$ operations, all
repeated $r = O(\frac{|G|\ln(|G|)}{|HK|})$ many times, for all
$\epsilon > 0$.  If $G = HK$, then we may take $r = 1$. 
\end{theorem}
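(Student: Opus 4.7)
The plan is to reduce the general case to the already-proved single-translate case of Theorem \ref{thm:translated-HK} by covering $G$ with translates of $HK$. The key ingredient is a covering lemma asserting that $G$ can be written as a union of $r = O(|G|\ln(|G|)/|HK|)$ sets of the form $HKx_i$.

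First I would establish the covering lemma by a standard probabilistic argument. Pick elements $x_1, x_2, \ldots, x_r$ independently and uniformly from $G$. For any fixed $g \in G$, the event $g \in HKx_i$ is equivalent to $x_i \in (HK)^{-1}g = K^{-1}H^{-1}g$, a set of size $|HK|$, so this occurs with probability $|HK|/|G|$. Hence the probability that $g$ is missed by all $r$ translates is at most $(1-|HK|/|G|)^r \le e^{-r|HK|/|G|}$, and a union bound over $g \in G$ gives failure probability at most $|G|e^{-r|HK|/|G|}$, which is less than $1$ for $r$ a suitable constant times $|G|\ln(|G|)/|HK|$. Thus such a cover exists. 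When $G = HK$ we may of course take $r = 1$ with $x_1 = 1$.

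Next I would use the cover to split the input. Given the cover $G = \bigcup_{i=1}^r HKx_i$, assign each $g \in G$ to a unique index $i(g)$ among the (possibly several) translates containing it, and define $c^{(i)} \in \C[G]$ by $c^{(i)}_g = c_g$ if $i(g)=i$ and $0$ otherwise. Then $c^{(i)}$ is supported on $HKx_i$ and $c = \sum_i c^{(i)}$. By linearity of the DFT,
\[\sum_{g \in G} c_g \bigoplus_{\rho \in \irr(G)} \rho(g) = \sum_{i=1}^r \sum_{g \in HKx_i} c^{(i)}_g \bigoplus_{\rho \in \irr(G)} \rho(g),\]
and each inner sum is exactly what Theorem \ref{thm:translated-HK} computes for the input $c^{(i)}$ supported on $HKx_i$.

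Finally I would tally the cost. Each of the $r$ applications of Theorem \ref{thm:translated-HK} uses $|H|$ many $K$-DFTs, $|K|$ many $H$-DFTs, and $O(|G|^{\omega/2+\epsilon} + (|H||K|)^{\omega/2+\epsilon})$ additional operations; doing these $r$ times accounts for the ``all repeated $r$ times'' clause. Summing the $r$ output vectors in $\prod_{\rho \in \irr(G)} \C^{\dim(\rho)^2}$ costs $O(r|G|)$ additions, which is absorbed into $r \cdot O(|G|^{\omega/2+\epsilon})$ since $\omega \ge 2$. The main obstacle is really only the covering argument; everything else is bookkeeping around Theorem \ref{thm:translated-HK}. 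One subtlety worth noting is that our splitting of $c$ via $i(g)$ relies on no particular structure beyond $g$ belonging to at least one chosen translate, so the logarithmic overhead from the cover is unavoidable in general but disappears in the important special case $G = HK$.
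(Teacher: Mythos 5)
Your proposal is correct and follows essentially the same route as the paper: a probabilistic covering argument to obtain translates $HKx_1, \dots, HKx_r$ covering $G$, followed by $r$ applications of Theorem \ref{thm:translated-HK}. Your explicit splitting of $c$ via a choice function $i(g)$ and your accounting for the final $O(r|G|)$ summation are details the paper leaves implicit, but they match the paper's intent.
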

\begin{proof}
We argue that there exist $x_1, x_2, \ldots, x_r \in G$ so that $\cup_i
HKx_i = G$. Then a $G$-DFT can be computed by applying Theorem \ref{thm:translated-HK} $r$ times
with these translations. The existence of the $x_i$ is a standard
application of the probablistic method: for randomly chosen $x_i$, the
probability $\cup_i HKx_i$ fails to contain a given $g \in G$ is $(1 -
|HK|/|G|)^r$, and the $r$ specified in the theorem statement makes
this quantity strictly less than $1/|G|$, so a union bound finishes
the argument. 
\end{proof}

\section{Exponent $\omega/2$ for finite solvable groups}
\label{sec:solvable}
We show how to derive algorithms for all solvable groups via our
reduction, matching the exponent $\omega/2$ algorithm of \cite{beth, CB}. An advantage of our approach is that we don't need to rely on
knowledge of the representation theory of $G$.  

We begin with a key definition:
\begin{definition}
A finite group $G$ is {\em supersolvable} if there is a sequence of
subgroups
\[\{1\}= G_0 \lhd G_1 \lhd G_2 \lhd \cdots \lhd G_k = G\]
such that each $G_i$ is normal in $G$, and for all $i$, $G_i/G_{i-1}$ is cyclic of prime order. 
\end{definition}
A {\em solvable} finite group $G$ is one in which the
requirement that each $G_i$ is normal in $G$ (rather than just $G_{i+1}$) is removed. An early result in the area of fast generalized DFTs was Baum's
algorithm which gives a fast DFT for all {\em supersolvable} groups.
\begin{theorem}[Baum]
\label{thm:fast-dft-supersolvable}
There is an algorithm that uses $O(|G|\log |G|)$ operations to compute
the generalized DFT over $G$, if $G$ is supersolvable. 
\end{theorem}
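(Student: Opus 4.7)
The plan is to induct on $|G|$ along the chief series, using Clifford theory to organize the combining step efficiently. The base case is the trivial group, where the DFT is free.

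For the inductive step, let $N = G_{k-1}$, which is normal in $G$ with $[G:N] = p$ for some prime $p$, and is itself supersolvable. Fix $g \in G$ with $G/N = \langle gN \rangle$ and write any $c \in \C[G]$ as $c = \sum_{i=0}^{p-1} c^{(i)} g^i$ with $c^{(i)} \in \C[N]$. By induction we can apply the $N$-DFT to each $c^{(i)}$ at total cost $p \cdot O(|N|\log|N|) = O(|G|\log|N|)$, producing for every $\sigma \in \irr(N)$ the $p$ matrices $A^{(i)}_\sigma = \sum_n c^{(i)}_n \sigma(n)$.

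The combining step uses Clifford theory applied to the prime-index normal subgroup $N$. Each $G$-conjugacy orbit in $\irr(N)$ has length either $1$ or $p$. If $\sigma$ is $G$-invariant, then $\sigma$ extends to $p$ distinct irreps $\rho_0, \ldots, \rho_{p-1}$ of $G$, all of dimension $\dim(\sigma)$, differing by twists by the $p$ characters of $G/N \cong \Z/p$. Computing $\sum_g c_g \rho_j(g)$ for every $j$ from the $A^{(i)}_\sigma$ reduces, after multiplication by the fixed matrices $\rho_j(g^i)$, to a length-$p$ DFT on each matrix entry, costing $O(p\log p \cdot \dim(\sigma)^2)$ per orbit (using Rader's algorithm when $p \neq 2$). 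If instead the orbit has length $p$, all $p$ $N$-irreps induce to a single $G$-irrep of dimension $p\dim(\sigma)$, and computing $\sum_g c_g \rho(g)$ reduces to arranging the $A^{(i)}_\sigma$ and their conjugates into the block structure dictated by the induced representation, costing $O(p^2 \dim(\sigma)^2)$ per orbit.

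Summing over all orbits and using $\sum_{\sigma \in \irr(N)} \dim(\sigma)^2 = |N|$, the combining step costs $O(|G|\log p)$. The recurrence $T(n) = pT(n/p) + O(n\log p)$ unrolls to $T(|G|) = O(|G|\sum_i \log p_i) = O(|G|\log|G|)$, since $\prod_i p_i = |G|$. The main technical obstacle is the first Clifford case: one must perform a length-$p$ DFT for general primes $p$ as an auxiliary subroutine (not a base case of the outer recursion), and select bases compatible with the $N$-adapted structure so that the entry-wise DFT has the correct form. The second case is conceptually easier but demands careful bookkeeping to align the basis produced by the $N$-DFT with the block structure of the induced $G$-representation; this is where relying on Theorem \ref{thm:beth-clausen} directly would fail, since its $[G:H]\cdot|G|$ bottleneck sum is precisely what the monomial/Clifford structure of supersolvable groups allows us to avoid.
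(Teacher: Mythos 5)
The paper cites this theorem to Baum without giving a proof, so there is no in-paper argument to compare against; your proof is a blind reconstruction of Baum's algorithm, and the skeleton is correct: induction along a chief series, Clifford theory at the prime-index normal step (invariant vs.\ free orbits, with the cyclic quotient $G/N$ guaranteeing that invariant $N$-irreps extend and Gallagher supplying the $p$ character twists), a length-$p$ FFT at the combining step, and a recurrence unrolling to $O(|G|\log|G|)$.

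There is, however, a material gap that you flag but do not close. Your cost bound $O(p\log p\cdot\dim(\sigma)^2)$ in the invariant case silently assumes each product $A^{(i)}_\sigma\cdot\rho_0(g)^i$ costs only $O(\dim(\sigma)^2)$, and likewise in the free case the wraparound of the block permutation introduces products with $\sigma(g^p)$. With dense matrices those products cost $\Theta(\dim(\sigma)^\omega)$ each; summing over $\sigma$, the combining step can then cost $\Theta(p\cdot\dim_{\max}\cdot|N|)$, which is as large as $|G|\cdot|N|^{1/2}$ and in particular is not $O(|G|\log|G|)$. The actual content of Baum's theorem is the existence of a chain-adapted basis in which every $\rho(g)$, for $g$ running over a transversal of the chief series, is a monomial matrix (one nonzero per row and column), making all such products cost $O(\dim^2)$. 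This needs more than the classical fact that supersolvable groups are M-groups: the monomial inductions must be chosen compatibly along the entire chain so that a single basis serves at every level of the recursion. Without constructing such a monomial adapted basis the claimed operation count is unsupported; the rest of your sketch (orbit dichotomy, dimension counting, recurrence) is fine.
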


An important class of supersolvable groups are $p$-groups. 
Together with this fact, the result of the previous section makes it quite easy to obtain an algorithm for all solvable
groups. We need the following classical result of Hall:
\begin{theorem}[Hall]
Let $G$ be a finite solvable group of order $ab$, with $(a,b)=1$. Then
there exists a subgroup $H \subseteq G$ of order $a$. 
\end{theorem}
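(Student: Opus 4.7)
The plan is to prove Hall's theorem by strong induction on $|G|$, using the fact that a minimal normal subgroup of a solvable group is an elementary abelian $p$-group. The base case $|G|=1$ is vacuous.

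First I would fix a minimal normal subgroup $N$ of $G$; by solvability and minimality, $|N| = p^k$ for some prime $p$. Since $\gcd(a,b)=1$ and $p^k$ divides $ab = |G|$, exactly one of $p^k \mid a$ or $p^k \mid b$ holds. If $p^k \mid a$, then the solvable quotient $G/N$ has order $(a/p^k)\,b$ with coprimality preserved, so induction yields a subgroup $\bar H \le G/N$ of order $a/p^k$; its preimage in $G$ has order $a$ and we are done.

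If instead $p^k \mid b$, then $G/N$ has order $a\,(b/p^k)$, and induction gives $\bar H \le G/N$ of order $a$. Let $H^*$ be its preimage in $G$, so $|H^*| = a p^k$. When $H^* \ne G$, I can recurse: $H^*$ is a strictly smaller solvable group of order $a \cdot p^k$ with coprime factors, so the induction hypothesis applied to $H^*$ produces a subgroup of order $a$.

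The crux is the remaining case $H^* = G$, which forces $|G| = a p^k$ and $b = p^k = |N|$. Here I must exhibit a complement to the abelian normal subgroup $N$ inside $G$, i.e., split the short exact sequence $1 \to N \to G \to G/N \to 1$. Because $N$ is abelian and $\gcd(|N|, [G:N]) = 1$, this splitting is provided by the abelian Schur--Zassenhaus theorem; one standard proof averages a set-theoretic section using that multiplication by $[G:N]$ is invertible on $N$, equivalently noting that $H^2(G/N, N) = 0$ under coprime orders. The resulting complement is a subgroup of order $a$, closing the induction. This splitting step is the main obstacle, and it is the only place where the coprimality hypothesis is used in an essential way beyond bookkeeping.
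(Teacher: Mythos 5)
The paper states Hall's theorem without proof --- it is cited as a classical result, so there is no "paper's own proof" to compare against. Your proof is correct and is essentially the standard textbook argument: induct on $|G|$, pass to a minimal normal subgroup $N$ (elementary abelian of order $p^k$ by solvability), and split into the case $p^k \mid a$ (lift a Hall subgroup of $G/N$ through $N$) and the case $p^k \mid b$. In the latter case you correctly handle the two subcases: if the preimage $H^*$ of the Hall subgroup of $G/N$ is proper you recurse on $H^*$, and if $H^* = G$ then $N$ is a normal abelian Hall $p$-subgroup, so the abelian Schur--Zassenhaus theorem (equivalently $H^2(G/N,N)=0$ under the coprimality hypothesis) furnishes a complement of order $a$. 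The bookkeeping --- including the observation that coprimality forces $p^k$ to divide exactly one of $a$ or $b$ once you know $p^k \mid ab$ --- is sound, and the induction is well-founded since $N$ is nontrivial. This is a complete and correct proof of the existence statement of Hall's theorem for solvable groups.
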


From this we obtain:

\begin{theorem}
Let $G$ be a finite solvable group. Then a $G$-DFT can be computed in
$O(|G|^{\omega/2 + \epsilon})$ operations, for all $\epsilon > 0$.  
\end{theorem}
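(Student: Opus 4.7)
The plan is a straightforward induction on $|G|$, using Theorem \ref{thm:main} (with $r=1$) to split off a Hall subgroup at each step and recursing on both factors.

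\textbf{Base case.} If $G$ is a $p$-group (in particular, if $|G|$ is prime), then $G$ is nilpotent, hence supersolvable, so Theorem \ref{thm:fast-dft-supersolvable} gives a DFT in $O(|G|\log|G|) \le O(|G|^{\omega/2+\epsilon})$ operations (using $\omega \ge 2$).

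\textbf{Inductive step.} Suppose $|G|$ has at least two distinct prime factors. Write $|G| = ab$ with $a,b > 1$ and $\gcd(a,b) = 1$ (for instance, take $a$ to be the largest prime power dividing $|G|$). Hall's theorem produces subgroups $H \le G$ of order $a$ and $K \le G$ of order $b$. Since $\gcd(|H|,|K|) = 1$, we have $H \cap K = \{1\}$, so $|HK| = |H||K| = |G|$ and therefore $G = HK$. Apply Theorem \ref{thm:main} with this $H, K$ and $r = 1$: the cost is $|H|$ many $K$-DFTs, $|K|$ many $H$-DFTs, plus
\[O\bigl(|G|^{\omega/2 + \epsilon} + (|H||K|)^{\omega/2 + \epsilon}\bigr) = O(|G|^{\omega/2 + \epsilon})\]
operations. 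Recursively, each $H$-DFT costs $O(|H|^{\omega/2+\epsilon})$ and each $K$-DFT costs $O(|K|^{\omega/2+\epsilon})$, so the combined recursive cost is bounded by
\[|H|\cdot O(|K|^{\omega/2+\epsilon}) + |K|\cdot O(|H|^{\omega/2+\epsilon}) \le 2\cdot O\bigl((|H||K|)^{\omega/2+\epsilon}\bigr) = O(|G|^{\omega/2+\epsilon}),\]
where the first inequality uses $\omega/2 + \epsilon \ge 1$ (since $\omega \ge 2$) to absorb the leading $|H|$ and $|K|$ factors. Summing the recursive and non-recursive costs yields the claimed bound.

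\textbf{Managing constants.} The only subtlety is verifying that the hidden constant does not blow up with recursion depth. Let $C_\epsilon$ denote the constant in the $O(\cdot)$ of Theorem \ref{thm:main}. Writing $T(n)$ for the worst-case operation count on a solvable group of order $n$, the recurrence $T(|G|) \le T(|H|) \cdot |K| + T(|K|) \cdot |H| + C_\epsilon |G|^{\omega/2+\epsilon}$ can be solved by inducting with the hypothesis $T(n) \le D_\epsilon n^{\omega/2+\epsilon}$ for a sufficiently large $D_\epsilon$; using $\omega/2 + \epsilon \ge 1$ gives $|H|\cdot|K|^{\omega/2+\epsilon} + |K|\cdot|H|^{\omega/2+\epsilon} \le 2(|H||K|)^{\omega/2+\epsilon}\cdot \min(|H|,|K|)^{1-(\omega/2+\epsilon)} \le 2|G|^{\omega/2+\epsilon}$, and any $D_\epsilon \ge 2D_\epsilon/2^{\omega/2+\epsilon-1} + C_\epsilon$ suffices once we also dominate the $p$-group base case. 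This is the main (minor) obstacle; it is routine since $\omega/2 + \epsilon > 1$ makes the geometric series converge.
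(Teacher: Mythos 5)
Your overall strategy (base case on $p$-groups, Hall's theorem to split $G = HK$ with coprime orders, Theorem \ref{thm:main} with $r=1$, recurse) is essentially the paper's plan, but your constant management in the last paragraph does not actually close, and this is a real gap, not a routine check.

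Concretely, your induction hypothesis $T(n) \le D_\epsilon n^{\omega/2+\epsilon}$ leads to the requirement $D_\epsilon \ge 2D_\epsilon / 2^{\omega/2+\epsilon-1} + C_\epsilon$, i.e.\ $D_\epsilon\bigl(1 - 2^{2-(\omega/2+\epsilon)}\bigr) \ge C_\epsilon$. This has a positive solution only when $2^{2-(\omega/2+\epsilon)} < 1$, i.e.\ when $\omega/2+\epsilon > 2$, equivalently $\omega > 4 - 2\epsilon$. For any $\omega \le 3$ and small $\epsilon$ (the entire regime of interest), this fails: the per-level ``damping'' factor $\min(|H|,|K|)^{1-(\omega/2+\epsilon)}$ can be as close to $1$ as $2^{1-(\omega/2+\epsilon)}$ (which happens when $\min(|H|,|K|) = 2$), and after doubling you get a factor strictly greater than $1$, so there is no geometric convergence. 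The claim that ``$\omega/2+\epsilon>1$ makes the geometric series converge'' is simply false; you would need $\omega/2+\epsilon>2$.

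The paper sidesteps this by choosing the Hall split to be \emph{balanced in the number of distinct prime factors} (each of $a,b$ gets at most $k/2$ of the $k$ distinct primes), so the recursion depth is $O(\log k) = O(\log\log |G|)$, and then using the induction hypothesis $T(n) \le (4A_\delta)^{\log k} n^{\omega/2+\delta} B\log n$; the $(4A_\delta)^{\log k}$ multiplier absorbs the per-level constant blowup and, because $k = O(\log n)$, is itself $n^{o(1)}$. Your unbalanced split (peeling off one Sylow subgroup at a time) can also be made to work, but only if you exploit that the peeled-off factor $H$ is a $p$-group so that $T(|H|) = O(|H|\log|H|)$ rather than $D_\epsilon|H|^{\omega/2+\epsilon}$ — treating the two branches symmetrically as you do discards exactly the structure needed to close the induction. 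With an asymmetric treatment and an induction hypothesis such as $T(n) \le D_\epsilon n^{\omega/2+\epsilon}\log n$, the telescoping in $\log$ saves you; as written, however, your induction does not terminate with a finite $D_\epsilon$.
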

\begin{proof}

Take $\delta = \epsilon / 2$. Let $A_{\delta} \ge 1$ be the constant hidden
in the $O(|G|^{\omega/2+\delta}+(|H||K|)^{\omega/2+\delta})$ notation
in Theorem \ref{thm:main}. Let $B$ be the constant in the big-oh
expression in the statement of Theorem
\ref{thm:fast-dft-supersolvable}. It suffices to prove that for any
finite group $G$ with $|G|$ having $k$ distinct prime factors, a $G$-DFT can be
computed in \[(4A_{\delta})^{\log k}|G|^{\omega/2+\delta}B\log|G|\]
operations, because for sufficiently large $G$, we have 
\[(4A_{\delta})^{\log k}B\log |G| \leq (4A_{\delta})^{\log \log
  |G|}B\log |G| \leq |G|^{\delta}.\]

The proof is by induction on the number of distinct prime factors in
the order of $G$. For the base case of $k = 1$, $G$ is a $p$-group, hence supersolvable, and we apply Theorem \ref{thm:fast-dft-supersolvable}.

Now, suppose $|G| = p_1^{a_1}\dots p_k^{a_k}$, where $p_1,\dots,p_k$
are distinct primes, then $|G| = ab$, where $a$ and $b$ each has no
more than $k/2$ distinct prime factors and $(a,b) = 1$. Applying
Hall's theorem (twice) there are subgroups $H, K$ of order $a$ and $b$
respectively. Since $(a, b) =1$, we must have $H \cap K = \{1\}$, and
then $G = HK$ because $|G|=ab$.  

We can then apply Theorem \ref{thm:main},  to reduce to the case of computing
$|H|$ many $K$-DFTs and $K$ many $H$-DFTs, at a cost of 
$2A_{\delta}|G|^{\omega/2+\delta}$ operations. But $H$ and $K$ are both
solvable, and hence by the induction hypothesis, these two sets of DFTs cost at most
\begin{eqnarray*}
& & |H|\cdot (4A_{\delta})^{\log
      (k/2)}|K|^{\omega/2+\delta}B\log|K| + |K|\cdot
    (4A_{\delta})^{\log (k/2)}|H|^{\omega/2+\delta}B\log|H|\\
& & \leq \frac{2}{4A_\delta}(4A_{\delta})^{\log
    k}|G|^{\omega/2+\delta}B\log|G|
\end{eqnarray*}
operations. Together with the $2A_{\delta}|G|^{\omega/2+\delta}$
overhead, this is no more than \[(4A_{\delta})^{\log
  k}|G|^{\omega/2+\delta}B\log|G|\] operations, as required. 
\end{proof}

\section{Exponent $\omega/2$ for finite groups of Lie type}
\label{sec:linear-groups}

One of the main payoffs of Theorem \ref{thm:main} is exponent
$\omega/2$ algorithms for finite groups of Lie type. This is because
groups of Lie type have an ``$LDU$-type'' decomposition which is
well-suited to Theorem \ref{thm:main}. We describe these
decompositions and the resulting DFT algorithms in this section.  All
of our ``$LDU$-type'' decompositions of groups of Lie type into three subgroups give rise to the following DFT algorithm:
\begin{theorem}
Let $H_1, H_2, H_3$ be subgroups of group $G$, and suppose all three
are either $p$-groups or abelian. Moreover, suppose $H_1H_2$ is a
subgroup and $H_1 \cap H_2 = \{1\}$ and $H_1H_2 \cap H_3 =
\{1\}$. Then there is a generalized DFT for $G$ that uses at most
\[O\left (|G|^{\omega/2 + \epsilon}\frac{|G|\log
    |G|}{|H_1||H_2||H_3|}\right )\]
operations, for all $\epsilon > 0$.
\label{thm:LDU}
\end{theorem}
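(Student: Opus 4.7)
The plan is to apply the main double-subgroup reduction (Theorem \ref{thm:main}) at two nested scales. At the outer level, I would apply it with $H := H_1H_2$ (a subgroup by hypothesis) and $K := H_3$. Because $H_1 \cap H_2 = \{1\}$ and $H_1H_2 \cap H_3 = \{1\}$, every element of $H_1H_2H_3$ has a unique factorization, so $|HK| = |H_1||H_2||H_3|$, and in particular $|H_1||H_2||H_3| \le |G|$. Theorem \ref{thm:main} then expresses the $G$-DFT as $r = O(|G|\log|G|/(|H_1||H_2||H_3|))$ iterations of: $|H_1||H_2|$ many $H_3$-DFTs, $|H_3|$ many $(H_1H_2)$-DFTs, and $O(|G|^{\omega/2+\epsilon} + (|H_1||H_2||H_3|)^{\omega/2+\epsilon}) = O(|G|^{\omega/2+\epsilon})$ additional operations.

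Next I would bound the two inner DFTs. Each $H_3$ is supersolvable (being either a $p$-group or abelian), so by Theorem \ref{thm:fast-dft-supersolvable} one $H_3$-DFT costs $O(|H_3|\log|H_3|)$. For an $(H_1H_2)$-DFT, I would apply Theorem \ref{thm:main} a second time, now inside the ambient group $H_1H_2$ with subgroups $H_1$ and $H_2$. Since $H_1H_2 = H_1 \cdot H_2$ exactly, the number of translates in this nested invocation is $r=1$, so the cost of one $(H_1H_2)$-DFT is $|H_1|$ many $H_2$-DFTs plus $|H_2|$ many $H_1$-DFTs plus an $O((|H_1||H_2|)^{\omega/2+\epsilon})$ overhead. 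Each $H_i$-DFT is supersolvable and costs $O(|H_i|\log|H_i|)$, so the supersolvable contribution $O(|H_1||H_2|\log(|H_1||H_2|))$ is absorbed (using $\omega/2 \ge 1$) into a total of $O((|H_1||H_2|)^{\omega/2+\epsilon})$ per $(H_1H_2)$-DFT.

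Finally, I would assemble a single outer iteration's cost:
\[
|H_1||H_2| \cdot O(|H_3|\log|H_3|) \;+\; |H_3| \cdot O((|H_1||H_2|)^{\omega/2+\epsilon}) \;+\; O(|G|^{\omega/2+\epsilon}).
\]
The first term is at most $O(|G|\log|G|)$. For the second, since $\omega/2+\epsilon \ge 1$ we have $|H_3| \le |H_3|^{\omega/2+\epsilon}$, so
\[
|H_3|(|H_1||H_2|)^{\omega/2+\epsilon} \;\le\; (|H_1||H_2||H_3|)^{\omega/2+\epsilon} \;\le\; |G|^{\omega/2+\epsilon}.
\]
Thus one iteration costs $O(|G|^{\omega/2+\epsilon})$, and multiplying by the $r$ translates yields the stated bound. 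The only conceptual point requiring care is recognizing that because $H_1H_2$ is exactly the product $H_1 \cdot H_2$, the nested recursion has $r=1$ and introduces no additional covering factor; the rest is routine bookkeeping that hinges on $|H_1||H_2||H_3| \le |G|$ and $\omega \ge 2$.
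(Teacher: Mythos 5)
Your proposal is correct and takes essentially the same approach as the paper: apply Theorem \ref{thm:main} first to the pair $(H_1H_2, H_3)$, then again inside $H_1H_2$ to $(H_1, H_2)$, and handle each $H_i$-DFT via Theorem \ref{thm:fast-dft-supersolvable}. Your write-up just spells out the final bookkeeping (absorbing the lower-order terms using $\omega/2 \ge 1$ and $|H_1||H_2||H_3| \le |G|$) a bit more explicitly than the paper does.
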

\begin{proof}
We apply Theorem \ref{thm:main} to the pair $H_1H_2$ and $H_3$ at a
cost of $O(|G|^{\omega/2 + \epsilon})$ plus $|H_1H_2|$ many $H_3$-DFTs
and $|H_3|$ many $H_1H_2$-DFTs. This is all repeated
\[r = O\left (\frac{|G|\log |G|}{|H_1||H_2||H_3|}\right )\]
many times. The $H_3$-DFTs cost $O(|H_3|\log |H_3|)$
because $H_3$ is abelian or a $p$-group (via
Theorem \ref{thm:fast-dft-supersolvable}). We apply Theorem \ref{thm:main} once
more to $H_1, H_2$, at a cost of $O(|H_1H_2|^{\omega/2 + \epsilon})$ plus $|H_1|$ many $H_2$-DFTs
and $|H_2|$ many $H_1$-DFTs. Each $H_1$-DFT costs $O(|H_1|\log |H_1|)$
because $H_1$ is abelian or a $p$-group, and the same is true for each
$H_2$-DFT. Altogether, the cost is
\begin{eqnarray*}
r\cdot \left [O(|G|^{\omega/2 + \epsilon}) \right .& + & |H_1H_2|\cdot O(|H_3| \log
|H_3|) \\
 & + & \left .|H_3|\cdot \left (O(|H_1H_2|^{\omega/2  + \epsilon}) +
  |H_1|\cdot O(|H_2|\log|H_2|) + H_2\cdot O(|H_1 \log H_1|) \right )
\right ]
\end{eqnarray*}
which is as claimed.
\end{proof}

From Carter \cite{Carter}, we have that all finite simple groups of Lie type
(except the Tits group) have a {\em split ($B$,
$N$)-pair}, which implies the following structure: 
\[G = \sqcup_{w \in W} BwU_w\]
$B$ and $N$ are subgroups, $W$ is the Weyl group (i.e. $W = B/(B \cap N)$), and $B = UT$
with $T$ a maximal torus (hence abelian) and $U, T$ are complements in
$B$. The $U_w$ are subgroups of $U$, and $U$ is a $p$-group. This
decomposition is ``with uniqueness of expression'' which implies
that $|BwU_w| = |B||U_w|$ for each $w$.

From this description we easily have the very general result:
\begin{theorem}
\label{thm:bn-pair}
Let $G$ be a finite group with a
split ($B$, $N$)-pair, with associated Weyl group
$W$. Then there is a fast DFT over $G$ that uses
$O(|G|^{\omega/2 + \epsilon}|W|)$ operations, for all $\epsilon > 0$. 
\end{theorem}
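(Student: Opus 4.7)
The plan is to exploit the explicit decomposition $G = \sqcup_{w \in W} BwU_w$ and handle each of the $|W|$ pieces separately, using Theorem \ref{thm:main} (in its translated form, Theorem \ref{thm:translated-HK}) twice per piece: once to peel off a conjugate of $U_w$ from $B$, and once to split $B = UT$.

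Fix $w \in W$ and set $K_w := wU_w w^{-1}$. Since $U_w \subseteq U$ and $U$ is a $p$-group, $K_w$ is also a $p$-group, and $BwU_w = BK_w \cdot w$. The ``uniqueness of expression'' in the split $(B,N)$-pair gives $|BwU_w| = |B|\,|U_w|$, so each element of $BK_w$ factors uniquely as $bk$ with $b \in B, k \in K_w$; in particular $B \cap K_w = \{1\}$ and $|B|\,|K_w| \le |G|$. For the portion of $c \in \C[G]$ supported on $BwU_w$, Theorem \ref{thm:translated-HK} with $H = B$, $K = K_w$, $x = w$ thus produces that piece of the DFT at the cost of $|B|$ many $K_w$-DFTs, $|K_w|$ many $B$-DFTs, plus an additive $O(|G|^{\omega/2+\epsilon} + (|B|\,|K_w|)^{\omega/2+\epsilon}) = O(|G|^{\omega/2+\epsilon})$.

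Next I bound each kind of subcall. Since $K_w$ is a $p$-group, hence supersolvable, each $K_w$-DFT costs $O(|K_w|\log|K_w|)$ by Theorem \ref{thm:fast-dft-supersolvable}, so the $|B|$ many $K_w$-DFTs together use $O(|B|\,|K_w|\log |G|) \le O(|G|\log|G|)$, which is absorbed into $O(|G|^{\omega/2+\epsilon})$ since $\omega \ge 2$. For each $B$-DFT I apply Theorem \ref{thm:main} inside $B$ to the pair $(U,T)$: since $B = UT$ and $U \cap T = \{1\}$ we may take $r = 1$, reducing to $|U|$ many $T$-DFTs and $|T|$ many $U$-DFTs, which total $O(|B|\log|B|)$ operations because $T$ is abelian and $U$ is a $p$-group (Theorem \ref{thm:fast-dft-supersolvable}). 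The overhead of this inner call is $O(|B|^{\omega/2+\epsilon})$, so each $B$-DFT costs $O(|B|^{\omega/2+\epsilon})$, and the $|K_w|$ many $B$-DFTs together cost $O\bigl(|K_w|\,|B|^{\omega/2+\epsilon}\bigr) \le O\bigl((|K_w|\,|B|)^{\omega/2+\epsilon}\bigr) \le O(|G|^{\omega/2+\epsilon})$, where the first inequality uses $|K_w|^{1-\omega/2} \le 1$ (because $\omega \ge 2$ and $|K_w| \ge 1$).

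Summing the per-$w$ bound $O(|G|^{\omega/2+\epsilon})$ over the $|W|$ pieces yields the stated $O(|G|^{\omega/2+\epsilon}\,|W|)$. The main thing to be careful about is the invocation of Theorem \ref{thm:translated-HK} for $(B, K_w, w)$: one must verify that the $(B,N)$-pair's uniqueness of expression really does hand us $B \cap K_w = \{1\}$ and a well-defined choice of factorization $g = bk$ on $BK_w$, which I did above via the bijection between $BwU_w$ and $B \times U_w$ sending $bwu \mapsto (b, wuw^{-1} \cdot w)$.
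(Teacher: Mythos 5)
Your proof is correct, and it differs from the paper's in one interesting way. The paper derives Theorem~\ref{thm:bn-pair} as a corollary of Theorem~\ref{thm:LDU}: it picks the \emph{single} largest piece $B\overline{w}U_w$ (of size at least $|G|/|W|$), takes $H_1 = U$, $H_2 = T$, $H_3 = U_w^{\overline{w}}$, and then covers all of $G$ by $r = O(|W|\log|G|)$ random right translates of $H_1H_2H_3$ as in the proof of Theorem~\ref{thm:main}; the resulting $\log|G|$ is absorbed into the $\epsilon$. You instead use the \emph{exact disjoint} Bruhat decomposition $G = \sqcup_{w\in W} BwU_w$, apply Theorem~\ref{thm:translated-HK} once per $w$ with $H=B$, $K=wU_ww^{-1}$, $x=w$, and then open up each $B$-DFT with a second application of Theorem~\ref{thm:main} to $B=UT$ with $r=1$. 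Both routes use the same subgroup structure ($U$, $T$, a conjugate of $U_w$) and the same $p$-group/abelian DFT base cases; yours is more direct in that it replaces the probabilistic covering with the exact $|W|$-part partition, which is cleaner and saves the extra $\log$ factor. Your verification that $B \cap wU_ww^{-1} = \{1\}$ via $|BwU_w|=|B||U_w|$ is the same uniqueness-of-expression observation the paper makes, and your rewrite $bwu = b(wuw^{-1})w$ correctly identifies $BwU_w$ with $B K_w w$ so that Theorem~\ref{thm:translated-HK} applies verbatim. The cost bookkeeping --- $|B||K_w| \le |G|$ absorbing the tensor-overhead term, $|K_w|\,|B|^{\omega/2+\epsilon} \le (|K_w||B|)^{\omega/2+\epsilon}$ from $\omega \ge 2$, and the $p$-group/abelian base cases via Theorem~\ref{thm:fast-dft-supersolvable} --- all checks out.
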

\begin{proof}
Fix the $w$ maximizing the size of the double coset $BwU_w$, and note
that $|BU_w^w| = |BwU_w| \ge |G|/|W|$. As noted this size is $|B||U_w|$, and hence
$B \cap U_w^w = \{1\}$.  Also from the description above, $B = UT$
with $U \cap T = \{1\}$; $T$ is abelian and $U, U_w^w$ are
$p$-groups. We are then in the position to apply Theorem
\ref{thm:LDU}, which yields the claimed operation count. 
\end{proof}

As one can see from Figure \ref{table:lie-type}, for families of finite simple
groups of Lie type, the Weyl group always
has order that is $|G|^{o(1)}$, so this algorithm has exponent
$\omega/2$, which is best-possible if $\omega = 2$. 
Next, we explicitly work out the more common cases of the general linear, orthogonal,
and symplectic families, and their variants. The
overhead coming from the parameter $r$ in Theorem \ref{thm:main} in each case is somewhat smaller than the
worst-case bound of $O(|W|\log |G|)$ coming from (the very general) Theorem \ref{thm:bn-pair}; instead it
approaches $O(\log |G|)$ as the underlying field size $q$ approaches infinity.

\subsection{The groups $\GL_n(\F_q)$ and $\SL_n(\F_q)$}
\label{sec:GL}
The easiest example for applying Theorem \ref{thm:LDU} is the general linear group.
\begin{theorem}
For each $n$ and prime power $q$, there is a generalized DFT for the
group $G = \GL_n(\F_q)$ that uses
$O(|G|^{\omega/2 + \epsilon})$
operations, for all $\epsilon > 0$.
\label{thm:GL}
\end{theorem}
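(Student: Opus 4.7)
The plan is to apply Theorem \ref{thm:LDU} directly, using the classical $LDU$ (Bruhat) decomposition of $\GL_n(\F_q)$. I would take $H_1 = U$ to be the group of upper unitriangular matrices, $H_2 = T$ to be the diagonal torus, and $H_3 = L$ to be the group of lower unitriangular matrices. Writing $q = p^k$, the subgroups $U$ and $L$ are $p$-groups (each has order $q^{n(n-1)/2}$), and $T \cong (\F_q^*)^n$ is abelian, so the hypotheses on the three subgroups in Theorem \ref{thm:LDU} are satisfied.

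Next I would verify the structural conditions. The product $H_1 H_2 = UT = B$ is the Borel subgroup of upper triangular invertible matrices, which is indeed a subgroup. The intersection $U \cap T$ consists of diagonal unitriangular matrices, so it equals $\{1\}$. For the third condition, any matrix in $B \cap L$ is simultaneously upper triangular and lower unitriangular, hence diagonal with $1$'s on the diagonal, hence the identity; so $H_1 H_2 \cap H_3 = \{1\}$ as required.

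The remaining step is to show that the overhead factor $\frac{|G|\log|G|}{|H_1||H_2||H_3|}$ appearing in the bound of Theorem \ref{thm:LDU} is $|G|^{o(1)}$, so that it is absorbed into the $|G|^\epsilon$ slack. Using $|\GL_n(\F_q)| = q^{n(n-1)/2}\prod_{i=1}^n(q^i-1)$ and $|H_1||H_2||H_3| = q^{n(n-1)}(q-1)^n$, the ratio simplifies to
\[\frac{|G|}{|H_1||H_2||H_3|} \;=\; \prod_{i=1}^n \frac{q^i-1}{q^{i-1}(q-1)} \;\le\; \left(\frac{q}{q-1}\right)^n \;\le\; 2^n.\]
Since $|G| \ge q^{n(n-1)} \ge 2^{n(n-1)}$, this ratio (together with the $\log|G|$ factor) is at most $|G|^\epsilon$ for any fixed $\epsilon > 0$ once $|G|$ is sufficiently large, with a universal constant absorbing any bounded $n$ cases.

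Plugging into Theorem \ref{thm:LDU} then yields operation count $O(|G|^{\omega/2 + \epsilon})$ as claimed. There is no real obstacle here beyond the routine bookkeeping for the overhead factor; the content of the theorem is really the recognition that the Bruhat decomposition provides exactly the $LDU$-type data that Theorem \ref{thm:LDU} consumes. The same argument, with $T$ replaced by the diagonal matrices of determinant $1$, handles $\SL_n(\F_q)$ and sets the template for the orthogonal and symplectic cases to follow.
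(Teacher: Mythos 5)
Your proposal is correct and is essentially identical to the paper's proof: both invoke Theorem~\ref{thm:LDU} with the unitriangular and diagonal subgroups (the paper labels them with lower/diagonal/upper in that order, you use upper/diagonal/lower, which is immaterial), and both bound the overhead factor $r$ by $(q/(q-1))^n \cdot O(\log|G|)$ and absorb it into $|G|^\epsilon$.
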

\begin{proof}
The three subgroups $H_1, H_2, H_3$ are the set of lower-triangular matrices with ones on the
diagonal, the set of diagonal matrices, and the set of upper-triangular matrices with ones on the
diagonal, which have sizes $q^{(n^2 -
  n)/2}, (q-1)^n$, and $q^{(n^2-n)/2}$, respectively. In the notation
of Theorem \ref{thm:LDU}, we
have
\[r = O\left (\frac{|G|\log |G|}{|H_1||H_2||H_3|}\right ) \le
O\left (\frac{q}{q-1}\right )^n(n^2 \log q)\]
which can be absorbed into the $|G|^{\epsilon}$ term.
\end{proof}
For $\SL_n(\F_q)$ the only difference is that the diagonal matrices
must have determinant one, so the size of that subgroup is
$(q-1)^{n-1}$ instead of $(q-1)^n$; the group itself is also smaller
by a factor of $q-1$. We obtain in exactly the same way as for Theorem \ref{thm:GL}:
\begin{theorem}
For each $n$ and prime power $q$, there is a generalized DFT for $G = \SL_n(\F_q)$ that uses
$O(|G|^{\omega/2 + \epsilon})$
operations, for all $\epsilon > 0$.
\label{thm:SL}
\end{theorem}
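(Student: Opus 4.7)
The plan is to mimic the proof of Theorem \ref{thm:GL} almost verbatim, simply replacing the diagonal subgroup by its determinant-one sublattice. Specifically, I would take $H_1$ to be the lower unitriangular matrices in $\SL_n(\F_q)$, $H_2$ to be the diagonal matrices in $\SL_n(\F_q)$, and $H_3$ to be the upper unitriangular matrices. The sizes are $|H_1| = |H_3| = q^{(n^2-n)/2}$ (each is a $p$-group) and $|H_2| = (q-1)^{n-1}$ (abelian, since imposing $\det = 1$ removes one degree of freedom from the diagonal torus), while $|G| = |\GL_n(\F_q)|/(q-1)$.

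Next I would check the hypotheses of Theorem \ref{thm:LDU}. The product $H_1 H_2$ is the full lower-triangular subgroup of $\SL_n(\F_q)$ (a Borel subgroup), so it is a subgroup. The intersection $H_1 \cap H_2$ consists of matrices that are simultaneously diagonal and unitriangular, so it is $\{1\}$. Finally, an element of $H_1 H_2$ is lower triangular, and an element of $H_3$ is upper unitriangular, so their intersection lies in the diagonal with $1$'s on the diagonal, i.e., $\{1\}$. Thus Theorem \ref{thm:LDU} applies and gives an operation count of
\[
O\!\left(|G|^{\omega/2 + \epsilon}\,\frac{|G|\log|G|}{|H_1||H_2||H_3|}\right).
\]

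The final step is to verify that the overhead factor is absorbed into $|G|^\epsilon$. Since $|G| = q^{n^2-1}\prod_{i=2}^n(1 - q^{-i})$ and $|H_1||H_2||H_3| = q^{n^2-n}(q-1)^{n-1}$, the ratio is bounded by
\[
\frac{|G|\log|G|}{|H_1||H_2||H_3|} \le O\!\left(\frac{q}{q-1}\right)^{\!n-1}\!(n^2\log q),
\]
exactly as in the $\GL_n$ case, which is $|G|^{o(1)}$ and hence swallowed by the $|G|^\epsilon$ factor. There is really no main obstacle here: the entire argument is parallel to Theorem \ref{thm:GL}, and the only bookkeeping is to note that restricting to determinant one changes $|H_2|$ from $(q-1)^n$ to $(q-1)^{n-1}$ and $|G|$ by the same factor of $q-1$, so the bound in Theorem \ref{thm:LDU} comes out the same up to constants depending on $n$.
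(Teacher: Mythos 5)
Your proposal is correct and matches the paper's proof exactly: the paper likewise observes that the only change from the $\GL_n(\F_q)$ case is that $|H_2|$ drops from $(q-1)^n$ to $(q-1)^{n-1}$ while $|G|$ drops by the same factor of $q-1$, so the bound from Theorem \ref{thm:LDU} is unchanged. Your verification of the hypotheses of Theorem \ref{thm:LDU} and your estimate of the overhead factor $r$ are both accurate.
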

Since the two dimensional case has attracted a lot of attention, we
record that result separately, for concreteness:
\begin{theorem}
\label{thm:sl2}
For each prime power $q$, there is a generalized DFT for $G = \SL_2(\F_q)$ that uses
$O(|G|^{\omega/2 + \epsilon})$
operations, for all $\epsilon > 0$.
\end{theorem}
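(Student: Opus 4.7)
The plan is to observe that Theorem \ref{thm:sl2} is the $n=2$ specialization of Theorem \ref{thm:SL}, so the proof is essentially a one-line invocation. I would first state this, and then, for concreteness (since this group is highlighted as the main motivating example in the introduction), unpack what the three subgroups look like, and verify directly that they satisfy the hypotheses of Theorem \ref{thm:LDU}.

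Specifically, let $H_1$ be the subgroup of lower-triangular matrices with ones on the diagonal, $H_2$ the subgroup of diagonal matrices of determinant one, and $H_3$ the subgroup of upper-triangular matrices with ones on the diagonal. Then $|H_1| = |H_3| = q$ and $|H_2| = q-1$; each is abelian (indeed $H_1, H_3$ are $p$-groups and $H_2$ is cyclic). The product $H_1 H_2$ is the Borel subgroup of lower-triangular matrices of determinant one, hence a subgroup with $H_1 \cap H_2 = \{1\}$, and one verifies $H_1 H_2 \cap H_3 = \{1\}$ directly from the shapes of the matrices. Thus the hypotheses of Theorem \ref{thm:LDU} are satisfied.

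Applying Theorem \ref{thm:LDU} gives an operation count of
\[
O\!\left(|G|^{\omega/2 + \epsilon} \cdot \frac{|G| \log |G|}{|H_1||H_2||H_3|}\right).
\]
Since $|G| = q(q-1)(q+1)$ and $|H_1||H_2||H_3| = q^2(q-1)$, the overhead factor is $O((q+1)\log|G|/q) = O(\log q)$, which can be absorbed into the $|G|^\epsilon$ slack by choosing $\epsilon$ appropriately (exactly as in the proof of Theorem \ref{thm:GL}). This yields the claimed $O(|G|^{\omega/2 + \epsilon})$ bound.

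There is no real obstacle here; the work has all been done in establishing Theorem \ref{thm:LDU} and in verifying that $\SL_n$ has the appropriate $LDU$-decomposition in Theorem \ref{thm:SL}. The only thing worth emphasizing, given the introduction's framing, is that this recovers the desired exponent $\omega/2$ for the long-studied group $\SL_2(\F_q)$, which is exponent $1$ under $\omega = 2$ and at most $1.19$ using $\omega < 2.3729$, beating the previous $4/3$ bound of \cite{LR92}.
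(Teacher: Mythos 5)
Your proof is correct and takes essentially the same approach as the paper: the same three subgroups $H_1, H_2, H_3$, applied via Theorem \ref{thm:LDU} (equivalently, two invocations of Theorem \ref{thm:main}). One small point in your favor: the paper's proof asserts that $|H_1H_2||H_3| = |G|$ and hence $H_1H_2H_3 = G$ (taking $r = 1$), but in fact $|H_1H_2||H_3| = q^2(q-1) = q^3 - q^2 < q^3 - q = |G|$; your version correctly computes the overhead ratio as $(q+1)/q = O(1)$ so that $r = O(\log q)$, which still absorbs into the $|G|^\epsilon$ slack and gives the same conclusion.
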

\begin{proof}
Let $H_1$ be the set of lower triangular matrices with ones on the
diagonal, $H_2$ be the set of diagonal matrices with determinant $1$,
and $H_3$ be the set of upper triangular matrices with ones on the
diagonal. These are all subgroups, each pairwise intersection is
$\{1\}$, and we have $H_1H_2$ is a subgroup. All three subgroups are
abelian, with orders $q, q-1$, and $q$, respectively. Since $|G| = q^3
- q$ we have in this case that $|H_1H_2||H_3| = |G|$ and hence
$H_1H_2H_3 = G$. We can perform the DFT by applying Theorem
\ref{thm:main} to $H_1H_2$ and $H_3$, and then to $H_1$ and $H_2$. The
overall cost is
\begin{eqnarray*}
O(|G|^{\omega/2 + \epsilon}) &  + &  |H_1H_2|\cdot O(|H_3| \log
|H_3|)  \\ & + & |H_3|\cdot \left (O(|H_1H_2|^{\omega/2  + \epsilon}) +
  |H_1|\cdot O(|H_2|\log|H_2|) + H_2\cdot O(|H_1 \log H_1|) \right )
\end{eqnarray*}
which simplifies to the claimed operation count. 

\end{proof}

\subsection{The symplectic groups $\Sp_{2n}(\F_q)$}
\label{sec:Sp}
A symplectic group of dimension $2n$ over $\F_q$ is the subgroup of
invertible matrices that preserve a symplectic form; all symplectic
forms are equivalent under a change of basis, so concretely we may
take $\Sp_{2n}(\F_q)$ to be the set of all matrices $A \in \GL_{2n}(\F_q)$
such that \[A^TQA=Q, \mbox{ where } Q = \left(\begin{array}{cc} 0 & J \\ -J
    & 0\end{array}\right) 
\] and $J$ is the matrix with ones on the antidiagonal. 
\begin{theorem}
For each $n$ and prime power $q$, there is a generalized DFT for $G = \Sp_{2n}(\F_q)$ that uses
$O(|G|^{\omega/2 + \epsilon})$
operations, for all $\epsilon > 0$.
\end{theorem}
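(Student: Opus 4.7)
The plan is to follow the templates from Section \ref{sec:GL} and apply Theorem \ref{thm:LDU} to the standard ``LDU'' (Bruhat) decomposition of the symplectic Borel. Take $H_1 = U$ to be the unipotent radical of the standard Borel (upper triangular symplectic matrices with $1$'s on the diagonal), $H_2 = T$ to be the diagonal maximal torus (matrices $\mathrm{diag}(t_1,\ldots,t_n,t_n^{-1},\ldots,t_1^{-1})$ with $t_i \in \F_q^\times$), and $H_3 = U^-$ to be the opposite unipotent (lower triangular symplectic matrices with $1$'s on the diagonal). A direct computation with $A^TQA = Q$ shows these are all subgroups of $\Sp_{2n}(\F_q)$; the standard structure theory for type $C_n$ gives $|H_1| = |H_3| = q^{n^2}$ (both $p$-groups) and $|H_2| = (q-1)^n$ (abelian).

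Next, I would verify the hypotheses of Theorem \ref{thm:LDU}. The product $H_1H_2 = UT = B$ is the Borel subgroup; $H_1 \cap H_2 = \{1\}$ since the only upper unitriangular diagonal matrix is the identity; and $H_1H_2 \cap H_3 = B \cap U^- = \{1\}$ since the only matrix that is both upper triangular and lower unitriangular is the identity. All three subgroups are $p$-groups or abelian, so Theorem \ref{thm:LDU} applies and produces a generalized DFT for $G$ using at most
\[
O\!\left(|G|^{\omega/2+\epsilon}\,\frac{|G|\log|G|}{|H_1||H_2||H_3|}\right)
\]
operations.

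The remaining step is to show the overhead $|G|/(|H_1||H_2||H_3|)$ is absorbable into $|G|^\epsilon$. Using $|G| = q^{n^2}\prod_{i=1}^n(q^{2i}-1)$ and $|H_1||H_2||H_3| = q^{2n^2}(q-1)^n$, the ratio equals $\prod_{i=1}^n(q^{2i}-1)/\bigl(q^{n^2}(q-1)^n\bigr)$. Bounding $q^{2i}-1 \le q^{2i}$ and $(q-1)^n \ge (q/2)^n$ gives an upper bound of $2^n$, which, together with the $\log|G|$ factor, is dominated by $|G|^\epsilon$ once $n$ is at least a (fixed, $\epsilon$-dependent) constant; the finitely many smaller cases are swallowed into the hidden constant.

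The main (and only minor) obstacle is the bookkeeping in the last step: checking that the overhead from $r$ in Theorem \ref{thm:main}, which is tied to how much $|H_1||H_2||H_3|$ falls short of $|G|$, is truly $|G|^{o(1)}$. This is not automatic in general but is clean here because $|G|$ and $q^{2n^2+n}$ agree up to a factor of the form $C^n$, so the ratio is controlled uniformly in both $q$ and $n$.
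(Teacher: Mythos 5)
Your proposal is correct and follows essentially the same approach as the paper: apply Theorem \ref{thm:LDU} to the unipotent/diagonal/opposite-unipotent triple in $\Sp_{2n}(\F_q)$ and absorb the $r$-factor into $|G|^\epsilon$. The only difference is that you quote $|H_1|=|H_3|=q^{n^2}$ from the standard type-$C_n$ root count, whereas the paper derives the lower bound $|H_1|,|H_3|\ge q^{n^2}$ by an explicit elementary computation (exhibiting a product of two subgroups of the right size inside $U\cap G$); either suffices for the bound.
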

\begin{proof}
Let $L, U, D$ be the lower-triangular (with ones on the diagonal),
upper-triangular (with ones on the diagonal), and diagonal subgroups
of $\GL_{2n}(\F_q)$, respectively. We view  our group $G$ as a subgroup of
$\GL_{2n}(\F_q)$ as well. It is well known that the order of $G$ is 
\[q^{n^2}\prod_{i=1}^n(q^{2i}-1) \le q^{2n^2+n}.\]

Now apply Theorem \ref{thm:LDU} with $H_1 = L \cap G, H_2 = D \cap G$
and $H_3 = U \cap G$. We note that $H_1$ and $H_3$ are $p$-groups and
$H_2$ is abelian (as before). Also, $H_1H_2$ is a subgroup, and $H_1
\cap H_2 = \{1\}$ and $H_1H_2 \cap H_3 = \{1\}$. 

It remains to bound the sizes of $H_1, H_2, H_3$. In order to lower bound the size of $H_3$, consider the following
subgroups of $\GL_{2n}(\F_q)$, 
\begin{eqnarray*}
H & = &\left \{\left (\begin{array}{c|c} I_n &  M \\ \hline 0 &
      I_n \end{array} \right): M \in \F_q^{n \times n}\right \}\\
K & = & \left \{\left (\begin{array}{c|c} A &  0 \\ \hline 0 &
      B \end{array} \right): A, B \mbox{ upper
      triangular with ones on the diagonal} \right \}.
\end{eqnarray*}
One can verify that $H \cap G$ is the subgroup in which $M$ is a
persymmetric matrix (symmetric about the anti-diagonal), and thus this
subgroup has order $q^{n(n+1)/2}$. Similarly, one can verify
that $K \cap G$ is the subgroup in which $A$ is an arbitrary
upper-triangular matrix with
ones on the diagonal and $B= J(A^T)^{-1}J$. Thus this subgroup has
order $q^{n(n-1)/2}$. We have 
\[(H \cap G)(K\cap G) \subseteq H_3\]
and so $|H_3| \ge q^{n(n+1)/2 + n(n-1)/2} = q^{n^2}$. A symmetric
argument shows that $|H_1|$ has the same order. It is also easy to
verify that $|H_2| = (q-1)^n$. 
In the notation
of Theorem \ref{thm:LDU}, we
have
\[r = O\left (\frac{|G|\log |G|}{|H_1||H_2||H_3|}\right ) \le
O\left (\frac{q}{q-1}\right )^n((n^2 + n) \log q)\]
which can be absorbed into the $|G|^{\epsilon}$ term.
\end{proof}

\subsection{The orthogonal groups $O_n(\F_q)$}
\label{sec:orthog}
An orthogonal group of dimension $n$ over $\F_q$ is a subgroup of
invertible matrices that preserve a nondegenerate symmetric quadratic
form. There are several inequivalent quadratic forms and thus several
non-isomorphic orthogonal groups. For simplicity, we work out only one
case (the ``plus type'' orthogonal group of even dimension, in odd
characteristic). A similar analysis can be easily carried
out for the other non-isomorphic orthogonal groups. In our case, concretely, we may
take $O_{n}(\F_q)$ to be the set of all matrices $A \in \GL_n(\F_q)$
such that \[A^TQA=Q, \mbox{ where } Q = \left(\begin{array}{cc} 0 & J \\ J
    & 0\end{array}\right) 
\] and $J$ is the matrix with ones on the antidiagonal. 
\begin{theorem}
For each even $n$ and odd prime power $q$, there is a generalized DFT
for $G = O_n(\F_q)$ specified via the above quadratic form, that uses
$O(|G|^{\omega/2 + \epsilon})$
operations, for all $\epsilon > 0$.
\end{theorem}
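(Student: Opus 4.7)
My plan is to follow the template of the preceding symplectic group theorem, substituting the orthogonal form $Q = \left(\begin{smallmatrix} 0 & J \\ J & 0 \end{smallmatrix}\right)$ and writing $n = 2m$ throughout. Let $L, D, U$ denote, respectively, the lower-triangular unipotent, diagonal, and upper-triangular unipotent subgroups of $\GL_n(\F_q)$, and set $H_1 = L \cap G$, $H_2 = D \cap G$, $H_3 = U \cap G$. Then $H_1, H_3$ are $p$-groups (as subgroups of unipotent groups) and $H_2$ is abelian; the intersections $H_1 \cap H_2 = \{1\}$ and $H_1 H_2 \cap H_3 = \{1\}$ inherit from uniqueness of $LDU$ decomposition in $\GL_n(\F_q)$; and $H_1 H_2$ is the lower-triangular part of $G$, a subgroup. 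These observations put us squarely in the hypotheses of Theorem~\ref{thm:LDU}.

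The main work is to lower-bound $|H_1|,|H_2|,|H_3|$ so that the repetition count $r = O(|G|\log|G|/(|H_1||H_2||H_3|))$ fits into the $|G|^\epsilon$ factor. For $H_2$: applying $A^T Q A = Q$ to a diagonal $A = \textup{diag}(a_1,\dots,a_n)$ forces $a_i a_{n+1-i} = 1$, so $H_2$ is parametrized freely by $a_1,\dots,a_m \in \F_q^*$, giving $|H_2| = (q-1)^m$. For $H_3$: writing an upper-triangular unipotent $A \in G$ in $m\times m$ blocks as $A = \left(\begin{smallmatrix} A_1 & A_2 \\ 0 & A_3 \end{smallmatrix}\right)$ with $A_1, A_3$ upper-triangular unipotent, the equation $A^T Q A = Q$ expands to $A_3 = J(A_1^T)^{-1}J$ together with the requirement that $A_3^T J A_2$ is skew-symmetric. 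Since $q$ is odd, the skew-symmetric $m\times m$ matrices form an $m(m-1)/2$-dimensional space, which combined with the $m(m-1)/2$ degrees of freedom in $A_1$ yields $|H_3| = q^{m(m-1)}$; a symmetric argument gives $|H_1| = q^{m(m-1)}$.

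Using the standard order formula $|G| = 2 q^{m(m-1)}(q^m - 1)\prod_{i=1}^{m-1}(q^{2i}-1) \le 2 q^{2m^2 - m}$ together with the bounds above,
\[
\frac{|G|\log|G|}{|H_1||H_2||H_3|} \;\le\; O\!\left(\left(\frac{q}{q-1}\right)^m n^2 \log q\right),
\]
which is $|G|^{o(1)}$ as either $m$ or $q$ tends to infinity, and is therefore absorbed into the $|G|^\epsilon$ term. Applying Theorem~\ref{thm:LDU} then delivers the claimed operation count $O(|G|^{\omega/2+\epsilon})$.

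The step I expect to require the most care is the block computation giving $|H_3| = q^{m(m-1)}$: one must expand $A^T Q A$ in $m\times m$ blocks to extract both the determination $A_3 = J(A_1^T)^{-1}J$ and the skew-symmetry condition on $A_3^T J A_2$, then check that $A_3$ so defined really is upper-triangular unipotent so that no parameters are double-counted. The odd-characteristic hypothesis is essential here, since in characteristic $2$ skew-symmetric coincides with symmetric and has a different dimension, which would upset the size match with $|G|$; this is precisely why the theorem restricts to odd $q$, and why a separate (but analogous) argument would be needed for the other inequivalent orthogonal forms and characteristics.
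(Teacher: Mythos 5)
Your proof is correct and follows essentially the same approach as the paper: take $H_1, H_2, H_3$ to be the intersections with $G$ of the lower-unipotent, diagonal, and upper-unipotent subgroups of $\GL_n(\F_q)$, verify the hypotheses of Theorem~\ref{thm:LDU}, and absorb the repetition count $r$ into the $|G|^\epsilon$ factor. The only (minor) difference is that you compute $|H_3| = q^{m(m-1)}$ directly by parametrizing the block form $A = \left(\begin{smallmatrix} A_1 & A_2 \\ 0 & A_3 \end{smallmatrix}\right)$, whereas the paper obtains the same lower bound by exhibiting a product of two smaller subgroups $(H\cap G)(K\cap G) \subseteq H_3$; both yield $m(m-1)/2$ degrees of freedom from $A_1$ and another $m(m-1)/2$ from the skew-symmetry constraint on $A_2$.
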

\begin{proof}
Let $L, U, D$ be the lower-triangular (with ones on the diagonal),
upper-triangular (with ones on the diagonal), and diagonal subgroups
of $\GL_{n}(\F_q)$, respectively. We view  our group $G$ as a subgroup of
$\GL_{n}(\F_q)$ as well. It is well known that the order of $G$ is at most $2q^{(n^2-n)/2}$.

Now apply Theorem \ref{thm:LDU} with $H_1 = L \cap G, H_2 = D \cap G$
and $H_3 = U \cap G$. We note that $H_1$ and $H_3$ are $p$-groups and
$H_2$ is abelian (as before). Also, $H_1H_2$ is a subgroup, and $H_1
\cap H_2 = \{1\}$ and $H_1H_2 \cap H_3 = \{1\}$. 

It remains to bound the sizes of $H_1, H_2, H_3$. In order to lower bound the size of $H_3$, first consider the following
subgroups of $\GL_{n}(\F_q)$, 
\begin{eqnarray*}
H & = & \left \{\left (\begin{array}{c|c} I_{n/2} &  M \\ \hline 0 &
      I_{n/2} \end{array} \right): M \in \F_q^{n/2 \times n/2}\right \}\\
K & = & \left \{\left (\begin{array}{c|c} A &  0 \\ \hline 0 &
      B \end{array} \right): A, B \mbox{ upper
      tri. with ones on the diagonal} \right \}.
\end{eqnarray*}
One can verify that $H \cap G$ is the subgroup in which $M$ is a ``
skew-persymmetric'' matrix (skew-symmetric about the anti-diagonal),
and thus this
subgroup has order $q^{((n/2)^2 - (n/2))/2}$. Similarly, one can verify
that $K \cap G$ is the subgroup in which $A$ is an arbitrary
upper-triangular matrix with
ones on the diagonal and $B= J(A^T)^{-1}J$. Thus this subgroup has
order $q^{((n^2)^2 - (n/2))/2}$. We have 
\[(H \cap G)(K\cap G) \subseteq H_3\]
and so $|H_3| \ge q^{(n/2)^2 - (n/2)}$. A symmetric
argument shows that $|H_1|$ has the same order. It is also easy to
verify that $|H_2| = (q-1)^{n/2}$. 
In the notation
of Theorem \ref{thm:LDU}, we
have
\[r = O\left (\frac{|G|\log |G|}{|H_1||H_2||H_3|}\right ) \le
O\left (\frac{q}{q-1}\right )^{n/2}((n^2- n)\log q/2)\]
which can be absorbed into the $|G|^{\epsilon}$ term.

\end{proof}

We note that in all of the cases just considered in Sections
\ref{sec:GL}, \ref{sec:Sp}, \ref{sec:orthog}, one obtains the same
results for the special or projective (or both) variants, by following
essentially the same argument. To obtain results for the projective
cases, we observe that quotient-ing all of the groups in our
decomposition by the center can only change the operation count by a
factor of some constant multiple of the size of the center, which in
these cases is itself a constant. 

Finally, we note that Theorem \ref{thm:bn-pair} and the surrounding discussion
imply

\begin{theorem}
Let $G$ be a finite simple group. Then there is a fast DFT over $G$
that uses $O(|G|^{\omega/2 + \epsilon})$ operations, for all $\epsilon
> 0$. 
\label{thm:all-finite-simple}
\end{theorem}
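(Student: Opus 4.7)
The plan is to invoke the Classification of Finite Simple Groups (CFSG) and dispatch each family separately. Every finite simple group $G$ falls into one of four classes: (i) cyclic groups of prime order, (ii) the alternating groups $A_n$ for $n \geq 5$, (iii) the 26 sporadic simple groups together with the Tits group, and (iv) the finite simple groups of Lie type (excluding the Tits group). I would handle these one at a time, aiming to show that each yields an exponent $\omega/2$ bound with a universal constant.

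For case (i), a cyclic group of prime order is abelian, and the standard DFT uses $O(|G| \log |G|)$ operations, which is well within the $O(|G|^{\omega/2 + \epsilon})$ budget since $\omega \geq 2$. For case (ii), the results of Clausen and Maslen (cited in the introduction) already give an $O(|G| \log^{O(1)} |G|)$ algorithm for the symmetric and alternating groups, again well below the claimed bound. For case (iii), there are only finitely many groups, so one can use the trivial $O(|G|^2)$ algorithm for each; since the list is finite, this contributes only a constant factor (depending on the largest sporadic, the Monster) and this constant can be absorbed into the $O(\cdot)$.

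The substantive case is (iv), the finite simple groups of Lie type other than the Tits group. Here I would invoke the structural fact from Carter, already used in the paper to establish Theorem \ref{thm:bn-pair}: every such group has a split $(B,N)$-pair with Weyl group $W$. Theorem \ref{thm:bn-pair} then gives an algorithm using $O(|G|^{\omega/2 + \epsilon} |W|)$ operations. To upgrade this to $O(|G|^{\omega/2 + \epsilon})$, it suffices to verify that $|W| = |G|^{o(1)}$ as $G$ ranges over any fixed family of finite simple groups of Lie type over $\mathbb{F}_q$.

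The main (and really only) obstacle is this uniform bound on $|W|$, and I would handle it by a case check using the standard tables (the paper references Figure \ref{table:lie-type}). For each family, $|W|$ depends only on the Lie rank $n$ (not on $q$), while $|G|$ grows polynomially in $q$ with degree at least the number of positive roots, which is also a function of $n$ that grows like $n^2$ or faster. Concretely, for the classical families one has $|W|$ bounded by $n! \, 2^n$ (as for $\Sp_{2n}$ or $O_{2n}$), while $|G| \geq q^{n^2}$, so $\log |W| / \log |G| \to 0$ as $|G| \to \infty$, whether $n$ grows or $q$ grows. The exceptional families have $|W|$ a fixed constant, so the bound is automatic. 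Combining the four cases and taking the worst constants completes the proof.
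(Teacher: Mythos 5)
Your proposal is correct and follows essentially the same route as the paper: invoke CFSG, dispatch the abelian and alternating families via their known exponent-one algorithms, absorb the sporadics and the Tits group into the constant, and handle the Lie-type families via Theorem~\ref{thm:bn-pair} and the split $(B,N)$-pair, using the tabulated Weyl group orders to check $|W| = |G|^{o(1)}$. The only difference is that you spell out the $|W|$ bound a bit more explicitly than the paper's proof, which simply references the table and the preceding discussion.
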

\begin{proof}
As noted in the discussion before and after Theorem \ref{thm:bn-pair}, all finite
simple groups of Lie type (except the Tits group) have a split
$(B,N)$-pair, and Weyl group of order $|G|^{o(1)}$, so Theorem \ref{thm:bn-pair}
yields exponent $\omega/2$ algorithms for these families. By the
Classification Theorem, the only
other infinite families of finite simple groups are the alternating
group and the abelian groups, both of which have exponent $1$
algorithms. The sporadic groups and the Tits group are a finite set of
exceptions that can be handled by choosing the constant in the big-oh
notation sufficiently large. 
\end{proof}

\section{A new exponent upper bound for all finite groups}

In this section we prove a structural result for all finite groups that allows us to make
use of the reduction in Theorem \ref{thm:main}. Just as Lev's theorem
regarding a large single subgroup allows one to use the single subgroup
reduction of Section \ref{sec:beth-clausen}
to obtain a non-trival upper bound for all finite groups,
the following theorem gives a {\em pair} of subgroups for use
in the reduction of Theorem
\ref{thm:main}. 

\begin{theorem}
There exists a monotone increasing function $f(x)\leq 2^{c\sqrt{\log
    x}\log\log x}$ for
a universal constant $c \ge 1$, for which the following holds: every finite group
$G$ that is not a $p$-group has proper subgroups $H, K$ satisfying
$|HK| \ge |G|/f(|G|)$.
\label{thm:HK-pair}
\end{theorem}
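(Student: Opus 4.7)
The plan is to induct on $|G|$, splitting into three cases. First, if $G$ has a proper nontrivial normal subgroup $N$ with $G/N$ not a $p$-group, we apply the inductive hypothesis to $G/N$ and lift: writing $\pi:G \to G/N$ for the projection and $\bar H, \bar K \le G/N$ for the given proper subgroups with $|\bar H\bar K| \ge |G/N|/f(|G/N|)$, the preimages $H = \pi^{-1}(\bar H)$ and $K = \pi^{-1}(\bar K)$ are proper in $G$ and satisfy $HK = \pi^{-1}(\bar H\bar K)$, so $|HK| = |N|\cdot|\bar H\bar K| \ge |G|/f(|G/N|) \ge |G|/f(|G|)$ by monotonicity of $f$.

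Second, if $G$ has a proper nontrivial normal subgroup $N$ with $G/N$ a $p$-group of order $p^a$, take $H = N$ and $K = P$ where $P$ is a Sylow $p$-subgroup of $G$; $K$ is proper since $G$ is not a $p$-group. Since $|P| = |G|_p = p^a \cdot |N|_p$ and $|P \cap N| \le |N|_p$, the index $[NP:N] = |P|/|P \cap N|$ is at least $p^a = [G:N]$, forcing $NP = G$ and $|HK| = |G|$ exactly.

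The remaining case is that $G$ is simple (and non-abelian, since a simple $p$-group is cyclic of prime order and excluded by hypothesis). Here we invoke the Classification of Finite Simple Groups. For an alternating group $A_n$ with $n \ge 5$, two distinct point stabilizers (each $\cong A_{n-1}$) intersect in $A_{n-2}$ and yield $|HK| = |G|(n-1)/n \ge |G|/2$. For a finite simple group of Lie type other than the Tits group, we use the split $(B,N)$-pair structure recalled in Section \ref{sec:linear-groups}: taking $H = B$ (a Borel subgroup) and $K = U^-$ (the opposite unipotent radical), $H \cap K = \{1\}$ so $|HK| = |B|\cdot|U^-| = |B|\cdot|U|$, which is at least $|G|/|W|$ from the Bruhat decomposition $|G| = |B|\sum_{w \in W} q^{\ell(w)}$. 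The sporadic groups and the Tits group form a finite set of exceptions, absorbed into the constant $c$.

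The main technical obstacle is bounding $|W|$ uniformly in the Lie type case. For classical groups of rank $r$ over $\F_q$, one has $|G| \ge q^{\Omega(r^2)}$, so $r = O(\sqrt{\log|G|})$ and $|W| \le r!\cdot 2^r = 2^{O(\sqrt{\log|G|}\log\log|G|)}$; for exceptional Lie-type groups $|W|$ is bounded by a universal constant. Taking $f(x)$ to be the monotone envelope of these bounds over all $G$ of order at most $x$ completes the argument.
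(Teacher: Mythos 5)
Your proposal is correct and follows essentially the same route as the paper's proof: reduce to the simple case via a normal subgroup (the paper uses a maximal normal subgroup and lifts directly, you use any normal subgroup and induct, but these are equivalent), handle the $p$-group quotient case with a Sylow subgroup exactly as the paper does, and invoke CFSG for the simple case with the Bruhat/big-cell argument giving $|HK|\ge|G|/|W|$ for groups of Lie type. The small variations—using $H=B$, $K=U^-$ instead of the paper's largest double coset $B\overline{w}B$, and two point stabilizers for $A_n$ instead of $A_{n-1}$ and $\{1\}$—are cosmetic and yield the same quality of bound.
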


\begin{proof}
If G is simple then by the Classification Theorem, we have several cases:
\begin{itemize}
\item $G$ is cyclic of prime order. This case cannot arise since 
  $G$ is not a $p$-group.  
\item $G$ is the alternating group $A_n$. Then we choose $H = A_{n-1}$
  and $K = \{1\}$ and we have $|HK| \ge |G|/n$, so as long as $f(x) >
\log x$, the theorem holds.
\item $G$ is a finite group of Lie Type. Then $G$ has a $(B,N)$ pair
  (the Tits Group is an exception; it does not have a $(B,N)$
  pair, but it is a single finite group so it can be treated along
  with the sporadic groups in the next case). Let $W = N/(B \cap N)$
  be the Weyl group, and from the axioms of a $(B, N)$ pair, we have
  that the double cosets $B\overline{w}B$ with
  $w \in W$ cover $G$ (the $\overline{w}$ denotes a lift to $N
  \subseteq G$). Thus there is some double coset $B\overline{w}B$ of size at
  least $|G|/|W|$. Taking $H = B^{\overline{w}}$ and $K = B$, we see
  that $|HK| = |B\overline{w}B| \ge |G|/|W|$. Now we verify that we
  can choose $f$ so that for each of the families in Figure
  \ref{table:lie-type}, $f(|G|) > |W|$.
\item $G$ is one of the sporadic groups. Let $C$ be the largest order
  of a sporadic group. Then by choosing $f(x)>C$, the theorem holds
  for $H = K = \{1\}$ in this case.
\end{itemize}
If $G$ is not simple, then let $N$ be a maximal normal subgroup of
$G$, so that $G/N$ is simple. We have two cases:
\begin{itemize}
\item $G/N$ is a $p$-group. Since $G$ is not a $p$-group, we have that
  $|G| = mp^k$ for $m > 1$ and $(m,p) = 1$. Let $P$ be a $p$-Sylow
  subgroup of $G$. Then $|P| = p^k$, and $|N| = mp^{k'}$ for some $k'
  < k$. Then $NP = G$ and both $N$ and $P$ are proper subgroups.
\item $G/N$ is a simple group that is not a $p$-group. Then apply the
  previous case analysis for simple groups to obtain $H/N, K/N$,
  proper subgroups of $G/N$ for which $|(H/N)(K/N)| \ge
  |G/N|/f(|G/N|)$. But then $H,K$ are proper subgroups of $G$ and
  \[|HK| = |(H/N)(K/N)||N| \ge
  |G/N||N|/f(|G/N|) = |G|/f(|G/N|) \ge |G|/f(|G|),\]
where the last inequality used the monotonicity of $f$.
\end{itemize}
\end{proof}

\begin{figure}\begin{center}
\begin{tabular}{|l|l|l|l|}
\hline
Name & Family & $|W|$ & $|G|$ \\
\hline
Chevalley & $A_\ell(q)$ & $(\ell + 1)!$ & $q^{\Theta(\ell^2)}$ \\
&$B_\ell(q)$ & $2^\ell\ell!$ & $q^{\Theta(\ell^2)}$ \\
&$C_\ell(q)$ & $2^\ell\ell!$ & $q^{\Theta(\ell^2)}$ \\
&$D_\ell(q)$ & $2^{\ell-1}\ell!$ & $q^{\Theta(\ell^2)}$ \\
\hline
Exceptional & $E_6(q)$ & $O(1)$ & $q^{\Theta(1)}$ \\
Chevalley & $E_7(q)$ & $O(1)$ & $q^{\Theta(1)}$ \\
& $E_8(q)$ & $O(1)$ & $q^{\Theta(1)}$ \\
& $F_4(q)$ & $O(1)$ & $q^{\Theta(1)}$ \\
& $G_2(q)$ & $O(1)$ & $q^{\Theta(1)}$ \\
\hline
Steinberg & ${}^2A_\ell(q^2)$ & $2^{\lceil \ell/2 \rceil}\lceil \ell/2\rceil!$ & $q^{\Theta(\ell^2)}$
\\
& ${}^2D_\ell(q^2)$ & $2^{\ell-1}(\ell-1)!$ & $q^{\Theta(\ell^2)}$\\
& ${}^2E_6(q^2)$ &  $O(1)$ & $q^{\Theta(1)}$ \\
&${}^3D_4(q^3)$ &  $O(1)$ & $q^{\Theta(1)}$\\
\hline
Suzuki & ${}^2B_2(q)$, $q = 2^{2n+1}$  &  $O(1)$ & $q^{\Theta(1)}$ \\
\hline
Ree & ${}^2F_4(q)$, $q = 3^{2n+1}$  &  $O(1)$ & $q^{\Theta(1)}$ \\
& ${}^2G_2(q)$, $q = 3^{2n+1}$  &  $O(1)$ & $q^{\Theta(1)}$ \\
\hline
\end{tabular}
\end{center}
\caption{Families of finite groups $G$ of Lie type, together with
  the size of their associated Weyl group $W$. These include all
  simple finite groups other than cyclic groups, the alternating groups, the 26
  sporadic groups, and the Tits group. See \cite{lev,wiki} for sources.}
\label{table:lie-type}
\end{figure}

\begin{figure}
\begin{center}
\begin{tikzpicture}[
declare function={
    func(\x)= (\x > 2.562) * (4 * \x / (4 + \x))   +
              (\x <= 2.562) * ((\x - 2 + sqrt(\x^2-4*\x+36)) / 4)
   ;
  }
]
\begin{axis}[
    axis lines = left,
    xlabel = $\omega$,
    ylabel = {DFT Exponent},
    ymin = 0.9,
    legend style={at={(1.5,0.5)},anchor=north},
]
\addplot [
    domain=2:3, 
    samples=100, 
    color=blue,
    ]
    {1 + x / 4};
\addlegendentry{Previous Best}

\addplot [
    domain=2:3, 
    samples=100, 
    color=red,
]
{func(x)};
\addlegendentry{Our Result}

\addplot [
    domain=2:3, 
    samples=100, 
    color=black,
    ]
    {x / 2};
\addlegendentry{Conjectured Optimal}

\addplot [
    domain=2:3, 
    samples=100, 
    color=green,
    ]
    {1};
\addlegendentry{Lower Bound}
 
\draw [stealth-,thick] (axis cs:2.562,1.55) -- (axis cs:2.562,1.45) node[below]{$\omega=2.562$} ; 
 
\end{axis}
\end{tikzpicture}
\end{center}
\caption{Upper bound in Theorem \ref{thm:general-groups} as a function of
  $\omega$. The previous best bound is from Theorem
  \ref{thm:single-subgroup-recursive}. Assuming that some dependence
  on fast matrix multiplication is necessary, $w/2$ is a reasonable
  conjecture for the optimal dependence. Exponent one is of course a
  trivial lower bound.}
\label{fig:bounds}
\end{figure}

Now we can use this theorem in a recursive algorithm that switches
between the single subgroup reduction and the double subgroup
reduction, as follows:

\begin{theorem}
For every finite group $G$, there is an exponent
$\frac{\omega-2+\sqrt{\omega^2-4\omega+36}}{4}$ algorithm computing
the DFT with respect to $G$ when $\omega \leq \frac{1+\sqrt{17}}{2}$,
or exponent $\frac{4\omega}{4+\omega}$ when $\omega \geq
\frac{1+\sqrt{17}}{2}$. In particular, when $\omega = 2$, the exponent
is $\sqrt{2}$.
\label{thm:general-groups}
\end{theorem}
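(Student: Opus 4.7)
The plan is to build a recursive algorithm that chooses between the single-subgroup reduction (Corollary~\ref{thm:beth-clausen-coarser}) and the double-subgroup reduction (Theorem~\ref{thm:main}) based on a threshold exponent $\beta \in [1/2, 1)$ that I will optimize. Given $G$, I first handle the case that $G$ is a $p$-group via Theorem~\ref{thm:fast-dft-supersolvable}. Otherwise, if $G$ has some proper subgroup of order at least $|G|^\beta$ (the \emph{large-subgroup case}) I apply Corollary~\ref{thm:beth-clausen-coarser} with it and recurse; if no such subgroup exists, then every proper subgroup has order less than $|G|^\beta$, so the pair $H,K$ produced by Theorem~\ref{thm:HK-pair} automatically satisfies $|H|,|K| < |G|^\beta$ while $|H||K| \ge |HK| \ge |G|/f(|G|) = |G|^{1-o(1)}$, and I apply Theorem~\ref{thm:main} to this pair, recursing on the $|H|$ many $K$-DFTs and $|K|$ many $H$-DFTs.

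Assuming inductively that $T(n) \le C_\epsilon n^{\alpha+\epsilon}$ for the target exponent $\alpha$, I analyze each branch. In the large-subgroup case with $|H| = n^\gamma$ and $\beta \le \gamma < 1$, Corollary~\ref{thm:beth-clausen-coarser} gives cost at most $n^{1-\gamma} T(n^\gamma) + O(n^{2-2\gamma+\gamma(\omega/2+\epsilon)})$; the recursive term is bounded by $n^{\alpha+\epsilon}$ whenever $\gamma < 1$, while the non-recursive term is largest at $\gamma=\beta$, producing the constraint $\alpha \ge 2-\beta(2-\omega/2)$. In the double-subgroup case, writing $|H|=n^a$ and $|K|=n^b$ with $a,b \le \beta$, and noting that $r=O(f(n)\log n)=n^{o(1)}$, the cost from Theorem~\ref{thm:main} is $n^{o(1)}(n^{a+b\alpha}+n^{b+a\alpha}+n^{\omega/2}+n^{(a+b)\omega/2})$. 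Each term is maximized at $a=b=\beta$ (feasible since $\beta \ge 1/2$), yielding constraints $\alpha \ge \beta/(1-\beta)$, $\alpha \ge \beta\omega$, and $\alpha \ge \omega/2$; the last is dominated by the others for $\omega \in [2,4)$.

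To finish, I optimize $\beta$ to balance the decreasing Case~1 constraint against the increasing Case~2 constraints. The binding Case~2 constraint switches at $\beta = (\omega-1)/\omega$: for larger $\beta$ it is $\beta/(1-\beta)$, for smaller $\beta$ it is $\beta\omega$. Setting $2-\beta(2-\omega/2) = \beta/(1-\beta)$ and eliminating $\beta$ via $\beta = \alpha/(1+\alpha)$ reduces to the quadratic $2\alpha^2 - (\omega-2)\alpha - 4 = 0$, whose positive root is $\alpha = (\omega-2+\sqrt{\omega^2-4\omega+36})/4$; setting $2-\beta(2-\omega/2) = \beta\omega$ gives $\beta = 4/(\omega+4)$ and $\alpha = 4\omega/(\omega+4)$. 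A short calculation shows the two optima exchange dominance exactly at $\omega = (1+\sqrt{17})/2$, producing the piecewise bound in the statement; at $\omega = 2$ the first formula evaluates to $\sqrt{2}$.

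The main obstacle is not these algebraic optimizations but the careful bookkeeping of subpolynomial factors through the recursion: each double-subgroup step introduces an $r = n^{o(1)}$ blow-up from Theorem~\ref{thm:main}, and a single-subgroup step may shrink the group order by only a factor of $|G|^{1-\beta}$, so the recursion depth can be as large as $O(\log\log|G|)$. I will need to show the accumulated overhead remains $n^{o(1)}$ and can be absorbed into the $\epsilon$-slack, by carrying along an auxiliary $\log^c n$ factor in the induction hypothesis (as in the proof of Theorem~\ref{thm:single-subgroup-recursive}) and choosing the constant $C_\epsilon$ large enough relative to the base-case constants coming from Theorems~\ref{thm:main} and \ref{thm:fast-dft-supersolvable}.
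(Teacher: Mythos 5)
Your proposal is correct and takes essentially the same approach as the paper: pick a threshold $\beta(\omega)$, apply the single-subgroup reduction (Corollary \ref{thm:beth-clausen-coarser}) when a sufficiently large subgroup is available and the double-subgroup reduction (Theorem \ref{thm:main} via Theorem \ref{thm:HK-pair}) otherwise, carry a $\log^c n$ factor through the induction to absorb constants, and optimize $\beta$ against the constraints $\alpha \ge \max\bigl(\beta/(1-\beta),\ \omega\beta,\ 2-(2-\omega/2)\beta\bigr)$, which yields exactly the piecewise formula with crossover at $\omega=(1+\sqrt{17})/2$. Two small remarks: the paper decides the branch purely from the sizes of the specific $H,K$ produced by Theorem \ref{thm:HK-pair} rather than scanning for an arbitrary large subgroup (an immaterial difference), and your claim that the recursion depth is $O(\log\log|G|)$ is inaccurate---single-subgroup steps can give depth $\Theta(\log|G|)$---though the $\log^2 n$ factor you propose carrying in the induction hypothesis already handles this.
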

To visualize these bounds, refer to Figure \ref{fig:bounds}. 

\begin{proof}
We describe our general strategy before formally analyzing the
complexity. For each possible value of $\omega$, we pick a threshold
$\beta$ as a function of $\omega$. This threshold will be used to
switch between the single subgroup and the double subgroup reductions.
	
Fix $G$. Consider the following recursive algorithm. If $G$ is a
$p$-group, then we apply Theorem \ref{thm:fast-dft-supersolvable}. If
$G$ is the trivial group, then the DFT is trivial as well. Otherwise,
let $H, K$ be the subgroups guaranteed by Theorem
\ref{thm:HK-pair}. If $|H|, |K|$ are both at most $|G|^{\beta}$, then
we apply Theorem \ref{thm:main} (the double subgroup
reduction). Otherwise one of $H, K$ has size at least $|G|^{\beta}$
(without loss of generality, assume it's $H$) and we apply Corollary \ref{thm:beth-clausen-coarser}
(the single subgroup reduction).
	
Let us now analyze the operation count in terms of $\beta$. After this analysis, we'll pick the optimal $\beta$ for each $\omega$ to minimize the operation count.
	
For this purpose, set $\delta= \min\{\epsilon, 0.1, \frac{0.1\epsilon}{\beta}\}$, and give names to some constants: 
\begin{itemize}
\item Let $A_\delta$ be the constant hidden in the $O(|G|^{\omega/2 +
    \delta} + (|H||K|)^{\omega/2 + \delta})$ notation of Theorem
  \ref{thm:main}. 
\item Let $B_\delta$ be the constant hidden in the $[G:H]^2\cdot
  O(|H|^{\omega/2+\delta})$ notation of Corollary \ref{thm:beth-clausen-coarser}. 
\item Let $B$ be the constant hidden in the $O(|G| \log |G|)$ notation
  of Theorem \ref{thm:fast-dft-supersolvable}.
\end{itemize} 

Let $T(n)$ denote an upper bound on the running time of this recursive
algorithm for any group $G$ of order $n$. For each fixed $\epsilon >
0$, we will prove by induction on $n$ that, for a universal constant
$C_\epsilon$,
\[T(n) \le C_\epsilon n^{\alpha + \epsilon}\log^2 n,\] where $\alpha$ is determined by $\beta$ and $\omega$.
This clearly holds for the base case of a $p$-group or the trivial
group, provided $C_\epsilon > B$ and $\alpha \geq 1$. 

By choosing $C_\epsilon$
sufficiently large, we may assume that $|G|$ is at least some fixed
constant size, and hence we may assume that $2^{c\sqrt{\log |G|}\log\log
   | G|}\cdot O(\log |G|)$ term in the notation of Theorem
  \ref{thm:main} is bounded above by $|G|^{\epsilon/10}$.

In the case that we apply Theorem \ref{thm:main}, the cost is at most
\[\left (|H| \cdot T(|K|) + |K| \cdot T(|H|) +
  A_\delta(|H||K|)^{\omega/2+\delta}\right) \cdot |G|^{\epsilon/10},\]
where $|H|, |K| \le |G|^{\beta}$. Applying the induction
hypothesis, we obtain:
\begin{eqnarray*}
T(n) & \le & 2C_\epsilon \Big(n^{\beta}n^{\beta(\alpha +
    \epsilon)}\log^2(n^{\beta}) + A_\delta n^{2\beta(\omega/2
             + \delta)} \Big) \cdot n^{\epsilon/10} \\
& \le & (2C_\epsilon\beta^2 + A_\delta)\cdot n^{\max(\beta+\beta\alpha+\beta\epsilon,\omega\beta+2\beta\delta) + \frac{\epsilon}{10}}\log^2 n
\end{eqnarray*}
which can be bounded above by $C_\epsilon n^{\alpha + \epsilon}\log^2 n$ as long as the following constraints are satisfied:
\begin{itemize}
	\item $\beta < \frac{\sqrt{2}}{2} \approx 0.707$;
	\item $\alpha \geq \max(\frac{\beta}{1-\beta}, \omega\beta)$;
	\item $C_\epsilon > \frac{A_\delta}{1-2\beta^2}$.
\end{itemize} 

In the case that we apply Corollary \ref{thm:beth-clausen-coarser}, the cost is
at most 
\[[G:H] \cdot T(|H|) + [G:H]^2 \cdot   B_\delta|H|^{\omega/2+\delta},\]
where $|H| \ge |G|^{\beta}$ and hence $[G:H] \le
|G|^{1-\beta}$. If we set $\gamma$ such that $|H| = |G|^\gamma$,
and thus $\beta \le \gamma \le 1$, and apply the induction hypothesis, we obtain,
\begin{eqnarray*}
T(n) & \le &  C_\epsilon n^{1-\gamma}n^{\gamma(\alpha + \epsilon)}\log^2(n/2)
  +B_\delta n^{2(1 - \gamma)}n^{\gamma(\omega/2 + \delta)} \\
& < & C_\epsilon n^{\alpha + \epsilon}(\log n)(\log n - 1)
  +B_\delta n^{2-(2-\omega/2)\beta+\delta} 
\end{eqnarray*}
which is at most $C_\epsilon n^{\alpha + \epsilon}\log^2 n$ as long as the following constraints are satisfied:
\begin{itemize}
	\item $\alpha \geq 2-(2-\omega/2)\beta$;
	\item $C_\epsilon \ge B_\delta$.
\end{itemize}

To recap, the above induction proof holds when 
\[\alpha = \max(\frac{\beta}{1-\beta}, \omega\beta, 2-(2-\frac{\omega}{2})\beta), \text{ and } \beta < \frac{\sqrt{2}}{2} \approx 0.707.\]

Now we solve for the optimal $\beta$ for each fixed $\omega$.

When $\omega \geq \frac{1+\sqrt{17}}{2} \approx 2.562$, the optimal is \[\beta^* = \frac{4}{4+\omega}, \alpha^* = \frac{4\omega}{4+\omega}.\]
When $\omega \leq \frac{1+\sqrt{17}}{2} \approx 2.562$, the optimal is \[\beta^* = \frac{10-\omega-\sqrt{\omega^2-4\omega+36}}{2(4-\omega)}, \alpha^* = \frac{\omega-2+\sqrt{\omega^2-4\omega+36}}{4}.\]
\end{proof}

\section{Conclusions}

There are two significant open problems that naturally follow from the
results in this paper. First, can one obtain exponent $\omega/2$
algorithms for all finite groups? This might be possible by proving a
more sophisticated version of Theorem \ref{thm:HK-pair}, which, for
example, manages to upper bound $|H \cap K|$. Also of interest would
be a proof of Theorem \ref{thm:HK-pair} that does not need the
Classification Theorem. 

A second question is whether the dependence on $\omega$ can be removed. Alternatively, can
one show that a running time that depends on $\omega$ is necessary by
showing that an exponent one DFT for a certain family of groups would
imply $\omega = 2$?

\paragraph{Acknowledgements.} We thank the SODA 2018 referees for
their careful reading of this paper and many useful comments.


\begin{thebibliography}{MRW16b}

\bibitem[Bau91]{Baum}
Ulrich Baum.
\newblock Existence and efficient construction of fast {Fourier} transforms on
  supersolvable groups.
\newblock {\em computational complexity}, 1(3):235--256, Sep 1991.

\bibitem[BCS97]{BCS}
P.~{B\"urgisser}, M.~Clausen, and M.~A. Shokrollahi.
\newblock {\em Algebraic Complexity Theory}, volume 315 of {\em Grundlehren der
  mathematischen Wissenschaften}.
\newblock Springer-Verlag, 1997.

\bibitem[Bet84]{beth}
Thomas Beth.
\newblock {\em Verfahren der schnellen Fourier-Transformation}.
\newblock Teubner, 1984.

\bibitem[Car89]{Carter}
Roger~W Carter.
\newblock {\em Simple groups of Lie type}, volume~22.
\newblock John Wiley \& Sons, 1989.

\bibitem[CB93]{CB}
Michael Clausen and Ulrich Baum.
\newblock {\em Fast Fourier transforms}.
\newblock Wissenschaftsverlag, 1993.

\bibitem[CH17]{Clausen-Huhne}
Michael Clausen and Paul H\"{u}hne.
\newblock Linear time fourier transforms of sn-k-invariant functions on the
  symmetric group sn.
\newblock In {\em Proceedings of the 2017 ACM on International Symposium on
  Symbolic and Algebraic Computation}, ISSAC '17, pages 101--108, New York, NY,
  USA, 2017. ACM.

\bibitem[Cla89]{clausenfast}
Michael Clausen.
\newblock Fast generalized {Fourier} transforms.
\newblock {\em Theoretical Computer Science}, 67(1):55--63, 1989.

\bibitem[HJ91]{HJ}
Roger~A. Horn and Charles~R. Johnson.
\newblock {\em Topics in Matrix Analysis}.
\newblock Cambridge University Press, 1991.

\bibitem[HU18]{HU18}
Chloe~Ching{-}Yun Hsu and Chris Umans.
\newblock A fast generalized {DFT} for finite groups of lie type.
\newblock In Artur Czumaj, editor, {\em Proceedings of the Twenty-Ninth Annual
  {ACM-SIAM} Symposium on Discrete Algorithms, {SODA} 2018, New Orleans, LA,
  USA, January 7-10, 2018}, pages 1047--1059. {SIAM}, 2018.

\bibitem[Lev92]{lev}
Arieh Lev.
\newblock On large subgroups of finite groups.
\newblock {\em Journal of Algebra}, 152(2):434--438, 1992.

\bibitem[LG14]{legall}
Fran{\c{c}}ois Le~Gall.
\newblock Powers of tensors and fast matrix multiplication.
\newblock In {\em Proceedings of the 39th international symposium on symbolic
  and algebraic computation}, pages 296--303. ACM, 2014.

\bibitem[LR92]{LR92}
John~D. Lafferty and Daniel Rockmore.
\newblock Fast fourier analysis for {${\rm SL}\sb 2$} over a finite field and
  related numerical experiments.
\newblock {\em Experiment. Math.}, 1(2):115--139, 1992.

\bibitem[Mas98]{Maslen}
David~Keith Maslen.
\newblock The efficient computation of fourier transforms on the symmetric
  group.
\newblock {\em Math. Comput.}, 67(223):1121--1147, 1998.

\bibitem[MR97a]{MRsurvey2}
David Maslen and Daniel Rockmore.
\newblock Separation of variables and the computation of {Fourier} transforms
  on finite groups, {I}.
\newblock {\em Journal of the American Mathematical Society}, 10(1):169--214,
  1997.

\bibitem[MR97b]{MRsurvey}
David~K Maslen and Daniel~N Rockmore.
\newblock Generalized {FFT}s -- a survey of some recent results.
\newblock In {\em Groups and Computation II}, volume~28, pages 183--287.
  American Mathematical Soc., 1997.

\bibitem[MR00]{double-coset}
David~K Maslen and Daniel~N Rockmore.
\newblock Double coset decompositions and computational harmonic analysis on
  groups.
\newblock {\em Journal of Fourier Analysis and Applications}, 6(4):349--388,
  2000.

\bibitem[MRW16a]{MRW}
David Maslen, Daniel~N Rockmore, and Sarah Wolff.
\newblock The efficient computation of {Fourier} transforms on semisimple
  algebras.
\newblock {\em arXiv preprint arXiv:1609.02634}, 2016.
\newblock To appear in Journal of Fourier Analysis and Applications.

\bibitem[MRW16b]{maslennew}
David Maslen, Daniel~N Rockmore, and Sarah Wolff.
\newblock Separation of variables and the computation of {Fourier} transforms
  on finite groups, {II}.
\newblock {\em Journal of Fourier Analysis and Applications}, pages 1--59,
  2016.

\bibitem[Roc95]{wreath}
Daniel~N. Rockmore.
\newblock Fast {Fourier} transforms for wreath products.
\newblock {\em Applied and Computational Harmonic Analysis}, 2(3):279 -- 292,
  1995.

\bibitem[Roc97]{R97}
Daniel Rockmore.
\newblock Some applications of generalized {FFT}s.
\newblock In {\em Proceedings of the 1995 {DIMACS} Workshop on Groups and
  Computation}, pages 329--369. June, 1997.

\bibitem[Roc02]{Rrecent}
Daniel~N Rockmore.
\newblock Recent progress and applications in group {FFT}s.
\newblock In {\em Signals, Systems and Computers, 2002. Conference Record of
  the Thirty-Sixth Asilomar Conference on}, volume~1, pages 773--777. IEEE,
  2002.

\bibitem[Wik17]{wiki}
Wikipedia.
\newblock List of finite simple groups --- wikipedia{,} the free encyclopedia,
  2017.
\newblock [Online; accessed 30-June-2017].

\end{thebibliography}
\end{document}